%% file: main.tex
\documentclass[prd,twocolumn,aps,letterpaper,amsmath,amssymb,preprintnumbers,showpacs,superscriptaddress,floatfix,longbibliography,nofootinbib]{revtex4-2}
\usepackage{array}
\usepackage{graphicx}
\usepackage[export]{adjustbox}
\usepackage[caption=false]{subfig}
\usepackage{tikz}
\usetikzlibrary{tikzmark}
\usetikzlibrary{calc}
\allowdisplaybreaks %
\usepackage[hypertexnames=true]{hyperref}
\usepackage[capitalize]{cleveref}
\hypersetup{
    colorlinks=true,       %
    linkcolor=blue,          %
    citecolor=blue,        %
    filecolor=blue,      %
    urlcolor=blue           %
}

\usepackage{dsfont}
\usepackage{slashed}
\usepackage{mathtools}
\usepackage{listings}
\usepackage{pgf}
\usepackage{fancyvrb}
\usepackage{longtable}
\usepackage{bm}
\usepackage{tikz-cd}
\usepackage{amsthm}
\usepackage{mathrsfs}
\usepackage{algorithm}
\usepackage{algpseudocode}
\usepackage[normalem]{ulem} %
\usepackage{graphicx} %
\AddToHook{cmd/appendix/before}{%
    \crefalias{section}{appendix}%
    \crefalias{subsection}{appendix}
}

\newtheorem{lemma}{Lemma}

\DeclareMathOperator{\imag}{Im}
\DeclareMathOperator{\real}{Re}

\DeclareMathOperator{\diag}{diag}

\DeclareMathOperator{\Tr}{Tr}

\DeclareMathOperator{\cl}{cl}
\newcommand{\D}[0]{\ensuremath{\mathbb{D}}}

\newcommand{\R}[0]{\ensuremath{\mathbb{R}}}
\newcommand{\C}[0]{\ensuremath{\mathbb{C}}}

\newcommand{\symmat}[1]{\ensuremath{\mathbb{S}^{#1}}}
\newcommand{\hermmat}[1]{\ensuremath{\mathscr{H}^{#1}}}
\newcommand{\Ncorr}{\ensuremath{N_{\text{corr}}}}

\usepackage{braket}

\newcommand{\norm}[1]{||#1||}

\newcommand{\matdot}{\bullet}

\DeclarePairedDelimiter\floor{\lfloor}{\rfloor}

\newcommand{\minimize}{\operatorname*{minimize}}
\newcommand{\maximize}{\operatorname*{maximize}}
\newcommand{\st}{\operatorname*{subject~to}}

\newtheorem{optprob}{Problem}[section]

\newenvironment{optproblem}[1]%
{\begin{optprob}[#1]\begin{equation}\begin{aligned}}%
{\end{aligned}\end{equation}\end{optprob}}

\newcommand{\covscale}[0]{\alpha}

\date{\today}

\begin{document}
\title{The Causal Bootstrap:\\
Bounding Smeared Spectral Functions from Non-Perturbative Euclidean Data}
\begin{abstract}
\input{abstract.tex}
\end{abstract}

\author{Ryan Abbott}
\email{rwa2110@columbia.edu}
\affiliation{Physics Department, Columbia University, New York, NY 10027, USA}

\author{Sarah Fields}
\affiliation{Physics Department, Columbia University, New York, NY 10027, USA}

\author{William I. Jay}
\affiliation{Department of Physics, Colorado State University, Fort Collins, CO 80523, USA}

\author{Patrick Oare}
\affiliation{Physics Department, Brookhaven National Laboratory, Upton, NY 11973, USA}

\author{Matteo Saccardi}
\affiliation{Department of Physics, Colorado State University, Fort Collins, CO 80523, USA}

\maketitle

\input{sec1_intro.tex}
\input{sec2_reconstruction.tex}

\input{sec3_duality.tex}
\input{sec4_finite-dim-reduction.tex}

\input{secN_equivalence.tex}

\input{sec5_numerics.tex}

\input{sec6_comparison.tex}
\input{sec7_conclusion.tex}

\section*{Acknowledgements}

We gratefully acknowledge useful conversations with
Norman Christ and Sebastian Mizera.
RA is an Ernest Kempton Adams Postdoctoral Fellow supported in part by the Ernest Kempton Adams fund for Physical Research of Columbia University.
RA and SF are supported in part by U.S. DOE grant No. DE-SC0011941.
SF also is supported in part by
the National Science Foundation Graduate Research Fellowship Program under Grant No. DGE-2036197. Any opinions, findings, conclusions or recommendations expressed in this material are those of the author(s) and do not necessarily reflect the views of the National Science Foundation.
PO is supported in part by the U.S. Department of Energy, Office of Science, Office of Nuclear Physics under grant Contract Number DE-SC0012704 (BNL).

The numerical calculations in this work made use of \textsc{NumPy}~\cite{vanderWalt:2011bqk,Harris:2020xlr}, \textsc{FLINT}~\cite{flint}, SymPy~\cite{10.7717/peerj-cs.103}, \textsc{SciPy}~\cite{2020SciPy-NMeth}, and \textsc{mpmath}~\cite{mpmath}.
Figures were generated using \textsc{matplotlib}~\cite{Hunter:2007}. %

\appendix

\input{app_finite-dim.tex}

\input{app_primal-dual.tex}
\input{app_equivalence_proof.tex}

\input{app_np-moment.tex}
\input{app_hvp-kernel.tex}

\bibliography{main.bib}

\end{document}

%% file: abstract.tex
This work introduces the causal bootstrap, a framework for bounding smeared spectral observables from finite non-perturbative Euclidean data. 
The method optimizes over the convex set of positive spectral densities compatible with the data to compute rigorous upper and lower bounds on the target observable.
Statistical uncertainties, including correlations, are incorporated through compatibility regions using the covariance matrix.
The bounds are equivalent, via Lagrange duality, to certified bounds on
the target smearing kernel.
For polynomial, rational, and piecewise rational kernels, the resulting dual problems can be reduced to finite-dimensional semidefinite programs using techniques familiar, e.g., in the numerical conformal bootstrap.
The present formulation clarifies the relation to moment problems, Nevanlinna--Pick interpolation, and linear kernel-reconstruction methods.
Numerical examples demonstrate the method.

%% file: sec1_intro.tex
\section{Introduction}
\label{sec:introduction}

Extracting spectral information from Euclidean-time correlation functions is a central problem in lattice field theory.
In the simplest zero-temperature case, the full problem corresponds mathematically to inverting the Laplace transform
\begin{equation}
    \label{eq:laplace-transform}
    C(t) = \int_0^\infty dE \, \rho(E) e^{-E t},
\end{equation}
where $C(t)$ is a Euclidean-time connected correlation function and $\rho(E)$ is a spectral density.
Euclidean correlators can be computed non-perturbatively from first-principles lattice field theory using Monte Carlo methods.
These methods give $C(t)$ at a finite, discrete set of points with finite statistical uncertainty, rendering the reconstruction problem ill-posed.

In recent years significant progress has been made by avoiding direct reconstruction of the spectral function and instead targeting \emph{smeared} spectral quantities~\cite{Poggio:1975af,Barata:1990rn,Hansen:2017mnd,Bulava:2019kbi,Hansen:2019idp,Patella:2024cto,Juttner:2026fui}.
That is, rather than attempting to determine $\rho(E)$ pointwise, one chooses a kernel $K(E)$ and attempts to compute an integrated quantity of the form
\begin{equation}
    \label{eq:intro-smeared-observable}
    \mathcal{K}[\rho] = \int_0^\infty dE \, K(E) \rho(E).
\end{equation}
Many phenomenologically relevant hadronic observables can be recast as smeared spectral reconstruction problems for different choices of $K$, including contributions to the anomalous magnetic moment of the muon, inclusive cross sections, transport coefficients, and inclusive decay rates~\cite{Jeon:1995zm,Meyer:2007ic,Meyer:2011gj,Bailas:2020qmv,Gambino:2020crt,Evangelista:2023fmt,ExtendedTwistedMass:2024myu}.
Linear methods have been proposed to estimate such quantities from Euclidean data, including Backus--Gilbert~\cite{Backus:1968svk,Pijpers:1992} and the closely related approach of Hansen, Lupo and Tantalo (HLT)~\cite{Hansen:2019idp}, as well as methods based on Chebyshev polynomials~\cite{Bailas:2020qmv,Gambino:2020crt,Fukaya:2020wpp}, Gaussian processes~\cite{valentineI,valentineII,Horak:2021syv,Candido:2024hjt,DelDebbio:2024lwm} and the Mellin transform~\cite{Bruno:2024fqc}.
Nonlinear methods based on analyticity, moment problems, or convex optimization have also been proposed to obtain rigorous constraints on the allowed values of smeared spectral functions~\cite{Bergamaschi:2023xzx,PhysRevLett.126.056402,Abbott:2025snz,Fields:2025glg,Abbott:2026hfy,Lawrence:2024hjm}.

The central question addressed in this work is the following.
Given a finite set of Euclidean correlator data and imposing explicitly the positivity of the spectral density, ``What are the sharp upper and lower bounds on $\mathcal{K}[\rho]$?" 
This framing of the question turns spectral reconstruction into an optimization problem.
For noisy Monte Carlo data, compatibility means lying within an 
uncertainty ellipsoid (induced by the covariance matrix)
surrounding the measured correlator, leading to conservative confidence intervals for the target observable. 
In the limit of exact data, the ellipsoid shrinks and compatibility turns into an interpolation of the known correlator values.

The present work develops this point of view as a bootstrap method for spectral functions.
The term ``bootstrap'' is used here in the same broad sense as in the numerical conformal bootstrap~\cite{Simmons-Duffin:2016gjk}: a finite set of non-perturbative input data is combined with positivity constraints to produce rigorous bounds on quantities of interest.
This use of ``bootstrap'' is distinct from the statistical bootstrap, wherein independent data are resampled in order to assess statistical uncertainties.
The proposed method also differs from recent work on the quantum mechanical bootstrap (e.g., Refs.~\cite{Han:2020bkb,Lin:2024vvg,Cho:2025vws,Lawrence:2024mnj}) which incorporate positivity but do not leverage additional non-perturbative input data.

The first novel contribution of this work is the unification of several spectral reconstruction methods as a bootstrap procedure.
Bounds coming from methods such as Nevanlinna--Pick (NP) interpolation~\cite{Bergamaschi:2023xzx} and moment problems~\cite{Abbott:2025snz,Abbott:2026hfy} can be cast in the language of convex optimization problems discussed in Ref.~\cite{Lawrence:2024hjm}, and such reformulations yield mathematically equivalent tight bounds on reconstructed smeared spectral quantities when respectively applied to equivalent input data, in a sense to be explained later (cf. \cref{sec:equivalences}). 
These statements hold both for the scalar problems, as well as when generalized to Hermitian matrices of correlation functions. %
The matrix-valued formulation can improve bounds by leveraging additional operators, similar in spirit but wholly different in realization to the generalized eigenvalue problem (GEVP) approach to spectroscopy~\cite{Luscher:1990ck,Blossier:2009kd}.
Methods from convex optimization are used to constrain the spectral function
that produces the most extreme $\mathcal{K}[\rho]$ while remaining compatible with the observed data.
The linear methods discussed in Refs.~\cite{Gambino:2020crt,Bailas:2020qmv,Hansen:2019idp} are closely related to the dual formulation of the optimization problem pioneered in Ref.~\cite{Lawrence:2024hjm}.
The primary distinction is that the latter introduces upper and lower bounds on the kernel, rather directly targeting its controlled approximation.
For positive spectral functions, this immediately allows to infer rigorous bounds on the resulting smeared observable.

The second novel contribution of this work is to apply methods from the bootstrap literature in order to reduce the optimization problems discussed above to finite-dimensional semi-definite programs (SDPs)~\cite{Simmons-Duffin:2015qma,isii1964inequalities}.
In particular, the basis functions used to construct bounds based on
both Nevanlinna--Pick interpolation and moment problems admit a polynomial structure that allows any spectral reconstruction with a piecewise rational smearing kernel to be reduced to a mathematically equivalent finite-dimensional problem. 
This can then be solved efficiently using primal-dual interior point methods as in the numerical conformal bootstrap~\cite{Simmons-Duffin:2015qma}.
In contrast to the dual formulation discussed in Ref.~\cite{Lawrence:2024hjm}, such formulations are entirely finite-dimensional, meaning that there are finite number of variables and a finite number of constraints.

The remainder of this work is structured as follows.
\Cref{sec:spectral-reconstruction,sec:duality,sec:finite-diml-reduce} develop the primal formulation, dual certificates, and finite SDP reduction.
\Cref{sec:equivalences} establishes equivalences between certain families of bounds.
\Cref{sec:numerics} presents numerical demonstrations of the proposed methods.
\Cref{sec:comparison,sec:discussion} compare the method with related approaches and discuss practical assumptions for lattice applications.

Code for the numeric examples in \cref{sec:numerics} is available at \cite{code_github}.

%% file: sec2_reconstruction.tex
\section{Spectral Reconstruction Problems}\label{sec:spectral-reconstruction}

This section defines the positive spectral measures, correlator functionals, and target observables used throughout the paper.
These ingredients lead to primal optimization problems for exact and noisy data.
Scalar versions certain optimization problems have already appeared in the literature; Ref.~\cite{Lawrence:2024hjm} has discussed the scalar version of \cref{eq:opt-problem-def-exact-block,eq:opt-problem-def-inexact-block}.
To the best of the authors' knowledge, \Cref{eq:opt-problem-def-inexact-block-sdp} is novel to the present work.

\subsection{Spectral densities and smeared observables}

Consider a set of operators $\{\mathcal{O}_0,\dots,\mathcal{O}_{r-1}\}$ in a zero-temperature quantum field theory.
The associated matrix of two-point Euclidean correlation functions admits the spectral decomposition
\begin{align}
    C_{ab}(t)
    = \braket{\mathcal{O}_a(t) \mathcal{O}_b(0)^\dag}
    = \int_0^\infty dE \, \rho_{ab}(E) e^{-E t},
    \label{eq:spectral-rep-exp-Et}
\end{align}
with the spectral density
\begin{equation}
\label{eq:rho-initial-def}
    \rho_{ab}(E)
    = \sum_{n=0}^\infty
    \braket{0 | \mathcal{O}_a | n}
    \braket{n | \mathcal{O}_b^\dag | 0}
    \delta(E - E_n)
\end{equation}
where $\{\ket{n}\}$ denotes an energy eigenbasis of the physical Hilbert space.
From \cref{eq:rho-initial-def}, it is clear that $\rho$ is a positive semidefinite Hermitian matrix-valued spectral function;
This well-known positivity can be understood as a consequence of reflection positivity, and is the guiding physical constraint used throughout this work. 

For the present applications, the target is a smeared spectral quantity, obtained by integrating the spectral density against a chosen kernel.
For a Hermitian matrix-valued kernel $K(E)$, define
\begin{equation}
\label{eq:smeared-observable-energy}
    \mathcal{K}[\rho]
    = \int_0^\infty dE \, K(E) \matdot \rho(E),
\end{equation}
where $X \matdot Y = \Tr[X^\dag Y]$ is the Hermitian trace pairing; for Hermitian $X$ and $Y$, the trace pairing is $\Tr[XY]$ and real.
More generally, the aim of this work is to determine the minimum and maximum possible values of \cref{eq:smeared-observable-energy} among all positive spectral densities compatible with the available Euclidean data.

\subsection{General formulation}

It will be useful to package several superficially distinct spectral reconstruction problems in a common notation.
Throughout this formulation, matrix-valued quantities live in the space $\hermmat{r}$ of $r\times r$ Hermitian matrices.
Let $I \subset \R$ be an interval, let $r$ be a positive integer, and let $\mathcal{M}^r(I)$ denote the space of $r\times r$ Hermitian matrix-valued measures on $I$.
Let $\mathcal{M}^r_+(I)$ denote the corresponding space of positive-semidefinite spectral functions.
That is, $\rho\in\mathcal{M}^r_+(I)$ means that $\int_Idx\,  f(x)^\dag \rho(x) f(x)\geq0$ for every compactly supported vector-valued test function $f:I\to\C^r$.

Given basis functions $b_t:I\to\R$, define the correlator data generated by $\rho$ as
\begin{equation}
\label{eq:corr-def-general-bt}
    C_t[\rho] = \int_I dx \, b_t(x) \rho(x),
    \qquad t = 0,\dots,N_t-1.
\end{equation}
Throughout the present work, $C_t$ is assumed to be known on $t=0,1,\dots$, but the formulas straightforwardly generalize to any finite, discrete set.
Unless indicated otherwise, all quantities are given in units of the lattice spacing.
The left-hand side of \cref{eq:corr-def-general-bt} can be understood as a rank-3 tensor, $C_{tab}\equiv C_{ab}(t)$.
For the zero-temperature spectral representation in \cref{eq:spectral-rep-exp-Et}, the interval is $I=[0,\infty)$ and the basis functions are $b_t(E)=e^{-Et}$.
For a finite-temperature bosonic correlator with temporal extent $N_t$, the usual basis takes the form
\begin{equation}
    b_t(E)
    =
    \frac{\cosh[E(t-N_t/2)]}{\sinh(EN_t/2)}.
\end{equation}
The target observable is similarly written as
\begin{equation}
\label{eq:general-kernel-functional}
    \mathcal{K}[\rho]
    =
    \int_I dx\, K(x) \matdot \rho(x),
\end{equation}
where $K:I\to\hermmat{r}$ is the chosen smearing kernel.

The formulation in terms of basis functions also encompasses several analyticity-based approaches.
For instance, moment problems arise from the zero-temperature spectral representation after the change of variables $\lambda = e^{-E}$.
With $\lambda\in[0,1]$, \cref{eq:spectral-rep-exp-Et} becomes
\begin{equation}
\label{eq:corr-moment-rep-hausdorff}
    C_{ab}(t)
    =
    \int_0^1 d\lambda \,
    \tilde{\rho}_{ab}(\lambda) \lambda^t,
\end{equation}
where
$\tilde{\rho}(\lambda(E)) = \lambda(E)^{-1} \rho(E)$.
Thus the Euclidean correlator is the sequence of moments of a positive measure on $[0,1]$, with basis functions
\begin{equation}
    b_t^\text{Moment}(\lambda) = \lambda^t.
\end{equation}
In the same variables, a smeared observable can be written as
\begin{equation}
    \mathcal{K}[\rho]
    =
    \int_0^1 d\lambda \,
    K(-\log\lambda) \matdot \tilde{\rho}(\lambda),
\end{equation}
up to the same change-of-variables convention used in the definition of $\tilde{\rho}$.
This is the Hausdorff moment problem relevant for zero-temperature non-staggered correlators~\cite{Abbott:2025snz}.

Analyticity-based interpolation methods are naturally phrased in terms of the Stieltjes transform
\begin{equation}
\label{eq:stieltjes-trans-def}
    G(z)
    =
    \int_I dx\, \frac{\rho(x)}{x-z},
\end{equation}
where $z$ is a complex variable in the upper half plane and $I$ is an interval containing the support of $\rho$.
For a positive spectral density, $G$ is a Pick function\footnote{Pick functions are also equivalently called (Herglotz-)Nevanlinna functions in the literature.}: its imaginary part is positive semidefinite in the upper half plane.
Conversely, the positive density can be recovered from the boundary value of the imaginary part through the Stieltjes--Perron inversion formula,
\begin{equation}
    \label{eq:stieltjes-perron-inversion}
   \rho(x) = \frac{1}{\pi} \lim_{\epsilon_0 \to 0} G(x + i \epsilon_0)
\end{equation}
in the usual distributional sense. Together, \cref{eq:stieltjes-trans-def,eq:stieltjes-perron-inversion} provide a bijection between positive-semidefinite spectral functions and Pick functions; this is the key fact that allows analyticity-based approaches to be formulated in terms of a positive spectral function below.
The imaginary part of $G(z)$ at a point $z=x_0+i\epsilon_0$ is itself a smeared spectral function,
\begin{equation}
\label{eq:imag-G-eq-cauchy}
    \frac{1}{\pi}\imag G(x_0+i\epsilon_0)
    =
    \int_I dx \, \rho(x)
    K^\text{Cauchy}_{x_0,\epsilon_0}(x),
\end{equation}
where
\begin{equation}
\label{eq:cauchy-kernel-def}
   K^\text{Cauchy}_{x_0,\epsilon_0}(x)
   =
   \frac{1}{\pi}
   \frac{\epsilon_0}{(x-x_0)^2+\epsilon_0^2}.
\end{equation}

If the input data are values of $G(z_n)$ at fixed points $z_n$ in the upper half plane, their real and imaginary parts are again linear functionals of $\rho$.
Indeed, taking real and imaginary parts of \cref{eq:stieltjes-trans-def} gives the basis functions
\begin{align}
    b_n^{\text{NP},\real}(x)
    &=
    \frac{x-\real z_n}{(x-\real z_n)^2+(\imag z_n)^2},
    \label{eq:NP-basis-real} \\
    b_n^{\text{NP},\imag}(x)
    &=
    \frac{\imag z_n}{(x-\real z_n)^2+(\imag z_n)^2},
    \label{eq:NP-basis-imag}
\end{align}
where the basis functions are indexed by $n$ instead of $t$ to avoid confusion with Euclidean time.
Thus, analyticity-based interpolation can be incorporated into the same primal problems, with a different choice of basis functions.

Once the basis functions and target kernel have been fixed, the goal is to compute the allowed range of $\mathcal{K}[\rho]$, constrained by spectral positivity and compatibility with the available data.
The following two subsections make this statement precise, first for exact and then for noisy data.

\subsection{Exact data}

Suppose first that the correlator values $\hat{C}_t$ are known exactly for $t=0,\dots,N_t-1$.
The lower bound on $\mathcal{K}[\rho]$ is obtained by minimizing the target observable over all positive densities that reproduce these data:
\begin{optproblem}{Primal spectral reconstruction problem for exact data}
\label{eq:opt-problem-def-exact-block}
\minimize_{\rho \in \mathcal{M}^r_+(I)} \quad & \mathcal{K}[\rho]  \\
\st \quad & C_t[\rho] = \hat{C}_t \qquad \forall t \in \{0, \dots, N_t-1\}.
\end{optproblem}
The corresponding upper bound is obtained by maximizing $\mathcal{K}[\rho]$ subject to the same constraints, or equivalently by minimizing $-\mathcal{K}[\rho]$.
Thus, a complete spectral bound requires solving two optimization problems, one for the lower endpoint and one for the upper endpoint.

Following the nomenclature of convex programming, \cref{eq:opt-problem-def-exact-block} will be referred to as the primal problem.
A choice of $\rho$ is called \emph{primal feasible} if it satisfies the positivity constraint and reproduces the correlator data.
In the scalar case $r=1$, \cref{eq:opt-problem-def-exact-block} is an infinite-dimensional linear program.
For $r>1$, the  Hermitian positivity condition makes it an infinite-dimensional semidefinite program.
In the language of the moment-problem literature, the same problem is an instance of a generalized moment problem~\cite{bertsimas2005optimal,lasserre2008semidefinite}.

The infinite-dimensional nature of \cref{eq:opt-problem-def-exact-block} presents a practical difficulty.
An arbitrary positive spectral function cannot be represented directly on a computer, and a naive discretization of the spectral variable would introduce an additional approximation.
The purpose of \cref{sec:duality,sec:finite-diml-reduce} is to show that, for important classes of basis functions and kernels, this infinite-dimensional problem can instead be transformed into a finite-dimensional semidefinite program without such discretization.

\subsection{Inexact data}\label{sec:inexact-data}

For practical lattice field theory calculations, Euclidean correlation functions are computed with finite statistical uncertainty.
The interpolation constraint of \cref{eq:opt-problem-def-exact-block} must therefore be relaxed to one of statistical consistency with the measured correlator.

Let $\hat{C}$ denote the rank-3 tensor of measured correlator data, $\hat{C}_{tab}$.
The Hermitian, positive-definite
covariance matrix $\Sigma$ is a rank-6 tensor, $\Sigma_{tab,t'a'b'}$.
The covariance matrix induces a norm defined by
\begin{equation}
\label{eq:cov-metric-def}
    \norm{C}_\Sigma^2
    = C^\dag \Sigma^{-1} C
    = C^*_{tab} \Sigma^{-1}_{tab, t'a'b'} C_{t'a'b'}.
\end{equation}
For numerical applications, it is convenient to use suitably vectorized versions of these quantities, in which case $\hat{C}_{tab}$ is understood to contain $N_{\rm corr}$ independent components.
A correlator $C[\rho]$ is considered statistically consistent with $\hat{C}$ if
\begin{equation}
\label{eq:C-hatC-stat-conisistent}
    \norm{C[\rho] - \hat{C}}_\Sigma \leq \sigma_0,
\end{equation}
where $\sigma_0>0$ is a chosen cutoff. %
\Cref{eq:C-hatC-stat-conisistent} imposes a cutoff on the correlated $\chi^2$ deviation between the candidate correlator and the measured central value.
The cutoff may be chosen using an assumed Gaussian distribution, by a data-driven prescription, or by another procedure appropriate to the analysis~\cite{Christ:2024nxz,Bruno:2022mfy}.

The noisy-data version of the primal problem is thus:
\begin{optproblem}{Primal spectral reconstruction problem for noisy data, second-order cone form}
\label{eq:opt-problem-def-inexact-block}
\minimize_{\rho \in \mathcal{M}^r_+(I)} \quad & \mathcal{K}[\rho]  \\
\st \quad & \norm{C[\rho] - \hat{C}}_{\Sigma} \leq \sigma_0 .
\end{optproblem}
\noindent
As noted in Ref.~\cite{Lawrence:2024hjm}, if the consistency region in \cref{eq:C-hatC-stat-conisistent} contains the true correlator with a given probability, then
\cref{eq:opt-problem-def-inexact-block} gives bounds for $\mathcal{K}[\rho]$ at the same level of probability.
They are conservative in the sense that the optimization propagates the entire allowed region of correlators to the target observable.
The resulting interval may therefore be wider than the uncertainty obtained from any particular estimator of $\mathcal{K}[\rho]$.
Note that a solution to \cref{eq:opt-problem-def-inexact-block} need not always exist (e.g., for $\sigma_0=0$), since noisy data cannot be represented exactly by a positive spectral function.

\Cref{eq:opt-problem-def-inexact-block} imposes a second-order cone constraint.
For the finite-dimensional reduction below, it is useful to rewrite it as an equivalent semidefinite constraint.
Using the Schur complement, \cref{eq:C-hatC-stat-conisistent} can be rewritten as
\begin{equation}
\label{eq:Gamma-rho-def}
\Gamma[\rho]
\equiv
\begin{pmatrix}
    \Sigma & C[\rho] - \hat{C} \\
    (C[\rho] - \hat{C})^\dag & \sigma_0^2
\end{pmatrix}
\succeq 0,
\end{equation}
since $\Sigma$ is positive definite. 
The same noisy-data problem can therefore be written as
\begin{optproblem}{Primal spectral reconstruction problem for noisy data, SDP form}
  \label{eq:opt-problem-def-inexact-block-sdp}
\minimize_{\rho \in \mathcal{M}^r_+(I)} \quad & \mathcal{K}[\rho]  \\
\st \quad & \Gamma[\rho] \succeq 0 .
\end{optproblem}
The advantage of \cref{eq:opt-problem-def-inexact-block-sdp} is that the noisy-data constraint now has the same semidefinite character as the positivity condition on $\rho$.
This observation will allow parallel treatment of exact and noisy data in the dual and finite-dimensional formulations below.

%% file: sec3_duality.tex
\section{Duality}\label{sec:duality}

The primal problems in \cref{eq:opt-problem-def-exact-block,eq:opt-problem-def-inexact-block-sdp} are formulated directly in terms of the spectral density.
This section uses Lagrange duality to replace the optimization over positive spectral functions by a search for certified bounds on the target kernel.
The dual problems have finitely many optimization variables, and their pointwise positivity constraints will be reduced to finite SDPs in \cref{sec:finite-diml-reduce}.
Scalar versions of some optimization problems have already appeared in the literature; Ref.~\cite{Lawrence:2024hjm} has discussed the scalar version of \cref{eq:opt-problem-def-exact-block-dual}.
To the best of the authors' knowledge, \Cref{eq:opt-problem-def-inexact-block-dual,eq:opt-problem-inexact-cone-dual}, as well as the application of the methods in \cref{sec:recov-prim-solut} to problems in lattice field theory are novel to the present work.

\subsection{Dual certificates for exact data}\label{sec:lagrange-exact}

The basic idea is easy to motivate conceptually.
Suppose that matrices $g_t\in\hermmat{r}$ can be chosen such that
\begin{equation}
\label{eq:dual-certificate-condition}
    M_g(x)
    \equiv
    K(x) - \sum_t g_t b_t(x)
    \succeq 0
    \qquad \forall x\in I.
\end{equation}
For reasons that will become clear, $M_g$ will be referred to as the dual certificate.
For any positive spectral density $\rho\in\mathcal{M}^r_+(I)$ compatible with the exact data, positivity implies
\begin{align}
    0
    &\leq
    \int_I dx\, M_g(x)\matdot \rho(x) \nonumber \\
    &=
    \mathcal{K}[\rho]
    -
    \sum_t g_t \matdot C_t[\rho] \nonumber \\
    &=
    \mathcal{K}[\rho]
    -
    \sum_t g_t \matdot \hat{C}_t .
    \label{eq:dual-bound-direct}
\end{align}
Thus every choice of $g_t$ satisfying \cref{eq:dual-certificate-condition} provides a rigorous lower bound
\begin{equation}
    \mathcal{K}[\rho]
    \geq
    \sum_t g_t \matdot \hat{C}_t
\end{equation}
for every feasible spectral density.
In this sense, the dual variables define a one-sided reconstruction of the target kernel by the basis functions.
The inequality in \cref{eq:dual-certificate-condition}, rather than an approximate equality, is what turns this reconstruction into a certificate.

The best lower bound obtainable in this way is the solution of the dual optimization problem:
\begin{optproblem}{Dual spectral reconstruction, exact data}
\label{eq:opt-problem-def-exact-block-dual}
\maximize_{g_t \in \hermmat{r}} \quad & \sum_{t} g_{t} \matdot \hat{C}_{t} \\
\st \quad & K(x) - \sum_{t} g_{t} b_t(x) \succeq 0 ,
\quad \forall x \in I.
\end{optproblem}
A feasible point of \cref{eq:opt-problem-def-exact-block-dual} is called dual feasible.
The optimal value of the dual problem is always a lower bound on the optimal value of the primal problem, a statement known as weak duality.
Explicitly, if $g^*$ is dual optimal and $\rho_*$ is primal optimal, then
\begin{equation}
\label{eq:weak-duality}
    \mathcal{K}[\rho_*]
    \geq
    \sum_t g_t^* \matdot \hat{C}_t.
\end{equation}
Under mild regularity assumptions, this inequality is saturated, and the primal and dual optima agree.
This stronger statement, known as strong duality, is expected to hold in most physical applications by Slater's condition, which roughly requires the existence of a strictly feasible point~\cite{slater2013lagrange,Boyd:2004fnq}.
An instance where Slater's condition would not apply arises in analyticity-based interpolation methods when the input data specify the feasibly spectral density uniquely. 
However, such extremal problems occur with probability zero, and hence are not likely to cause issues in practice.

The same dual problem also follows from the standard Lagrange-duality construction.
Introduce a positive multiplier $\mu\in\mathcal{M}^r_+(I)$ for the positivity constraint and matrices $g_t\in\hermmat{r}$ for the interpolation constraints.
The Lagrange function is
\begin{align*}
    L&(\rho,\mu,g)\\
    &=
    \mathcal{K}[\rho]
    -
    \int_I dx\, \mu(x)\matdot \rho(x)
    -
    \sum_t g_t\matdot\left(C_t[\rho]-\hat{C}_t\right)
    \\
    \begin{split}
    &=
    \int_I dx\, \rho(x)\matdot
    \left[
        K(x)-\sum_t g_t b_t(x)-\mu(x)
    \right]\\
    & \quad +
    \sum_t g_t\matdot\hat{C}_t.
    \end{split}
\end{align*}
Minimization over the unconstrained variable $\rho$ is bounded from below only if
\begin{equation}
\label{eq:dualize-mu-reduce}
    \mu(x)
    =
    K(x)-\sum_t g_t b_t(x) \succeq 0, \quad \forall x\in I.
\end{equation}
Eliminating $\mu$ delivers precisely the dual problem, \cref{eq:opt-problem-def-exact-block-dual}.
This derivation is useful because it generalizes directly to the noisy-data problem below.

\subsection{Extremal feasible measures}\label{sec:recov-prim-solut}

The dual problem does not directly return the physical spectral density.
Instead, it identifies the most constraining one-sided approximation to the target kernel.
Nevertheless, when strong duality holds, the optimal dual solution contains useful information about the spectral density that saturates the bound.
This extremal density should not be interpreted as the true physical spectrum but rather as the feasible spectral density that produces the most extreme allowed value of the chosen observable.
Similar methods to those described in this section have also seen use in the numerical conformal bootstrap, referred to as the \emph{extremal functional method}~\cite{Simmons-Duffin:2016wlq,El-Showk:2012vjm}. 

Let $M_g(x)$ be the dual certificate defined in \cref{eq:dual-certificate-condition}.
At primal and dual optimum, saturation of \cref{eq:dual-bound-direct} requires
\begin{equation}
\label{eq:complementary-slackness}
    \rho_*(x) M_{g^*}(x)=0 ,\quad \forall x \in I
\end{equation}
in the sense of positive matrix measures.
In the convex-optimization literature, this condition is known as complementary slackness.
In the scalar case, it implies that $\rho_*$ can have support only where the dual certificate vanishes, i.e., $M_{g^*}(x)=0$.
If the zero set is a finite set $\{x_i\}_{i=1}^N$, the extremal feasible measure takes the form
\begin{equation}
\label{eq:rho-opt-compl-constrainted}
    \rho_*(x)
    =
    \sum_{i=1}^N A_i \delta(x-x_i),
    \qquad A_i\geq0.
\end{equation}
The support points $x_i$ are therefore determined by the zeros of the optimal dual certificate.
The coefficients $A_i$ are then constrained by the exact data,
\begin{equation}
\label{eq:rho-star-interpolate}
    C_t[\rho_*]
    =
    \sum_{i=1}^N A_i b_t(x_i)
    =
    \hat{C}_t,
\end{equation}
and by saturation of the bound,
\begin{equation}
\label{eq:rho-star-dual-match}
    \mathcal{K}[\rho_*]
    =
    \sum_{i=1}^N A_i K(x_i)
    =
    \sum_t g_t^* \matdot \hat{C}_t.
\end{equation}
Together these equations give finite-dimensional linear constraints on the nonnegative coefficients $A_i$.
These constraints can be solved using standard linear-algebra methods or non-negative least squares~\cite{lawson1995solving}.

For matrix correlators, complementary slackness is less restrictive.
A product of positive-semidefinite matrices can vanish even when neither factor vanishes.
Thus, in the matrix case, \cref{eq:complementary-slackness} constrains the null directions of $M_{g^*}(x)$ rather than simply localizing the measure to zeros of a scalar function.
Although the dual certificate is generically a full-rank object, $M_{g^*}(x)$ at the dual optimum is typically observed to be rank deficient {in the numeric examples discussed below.
The key point for the present work is that the extremal measure is a diagnostic of the bounding problem, not an independent extraction of the physical spectrum.

\subsection{Noisy data}\label{sec:lagrange-inexact}

A similar dual formulation extends to the noisy-data primal problem in \cref{eq:opt-problem-def-inexact-block-sdp}.
The only new feature is that compatibility with the data is encoded by the semidefinite constraint $\Gamma[\rho]\succeq0$ rather than by exact interpolation conditions.
The Lagrange multiplier for this constraint is a positive-semidefinite matrix $Z\in\hermmat{\Ncorr+1}$. 
Define, using \cref{eq:Gamma-rho-def} the quantity
\begin{equation}
\label{eq:Gamma0-def}
    \Gamma_0
    \equiv
    \Gamma[\rho=0]
    =
    \begin{pmatrix}
        \Sigma & -\hat{C} \\
        -\hat{C}^{\dag} & \sigma_0^2
    \end{pmatrix}.
\end{equation}
The entries of $Z$ conjugate to the off-diagonal block of $\Gamma[\rho]$ play the role of the matrices $g_t$ appearing in the exact-data dual.
More explicitly, this embedding can be written schematically as
\begin{equation}
\label{eq:Z-g-embed}
    Z
    =
    \frac{1}{2}
    \begin{pmatrix}
        \ast & g \\
        g^\dag & \ast
    \end{pmatrix},
\end{equation}
where the ordering of the components of $g$ matches the vectorization of the correlator data.

Dualizing \cref{eq:opt-problem-def-inexact-block-sdp} gives
\begin{optproblem}{Dual spectral reconstruction, noisy data}
\label{eq:opt-problem-def-inexact-block-dual}
\maximize_{Z \in \hermmat{\Ncorr+1}} \quad & - Z \matdot \Gamma_0\\
\st \quad
& K(x) - \sum_t g_t b_t(x) \succeq 0,
\quad \forall x \in I, \\
& Z \succeq 0,
\end{optproblem}
\noindent
with $g_t$ embedded in $Z$ as in \cref{eq:Z-g-embed}.
As in the exact-data case, the pointwise positivity condition is the kernel certificate.
The new matrix constraint $Z\succeq0$ encodes the covariance ellipsoid that defined statistical consistency in the primal problem.
Every feasible $Z$ provides a rigorous lower bound
\begin{equation}
    \mathcal{K}[\rho]
    \geq
    -Z\matdot\Gamma_0
\end{equation}
for every positive spectral density satisfying $\Gamma[\rho]\succeq0$.

The noisy-data dual has an equivalent form that makes contact with the exact-data certificate more directly.
Starting from \cref{eq:opt-problem-def-inexact-block}, one reformulates the constraint 
using a slack variable $y$:
\begin{align}
\norm{C[\rho]-\hat{C}}_\Sigma \leq \sigma_0    
\iff 
\begin{cases}
C[\rho]-\hat{C}-y = 0,\\
\norm{y}_\sigma \leq \sigma_0.
\end{cases}
\end{align}
The second line on the right-hand side constrains the slack variable $y$ to lie within an ellipsoid.
The associated Lagrange function is
\begin{multline}
    L(\rho,y;g,\mu) = \mathcal{K}[\rho]
    - \int \mu(x) \matdot \rho(x) \, dx \\
    - g \matdot \left( C[\rho] - \hat{C} - y\right).
\end{multline}
where $\mu \in \mathcal{M}^r_+(I)$ enforces the positivity of $\rho$.
Minimization over $\rho$ is bounded from below only when the dual certificate from \cref{eq:dual-certificate-condition} satisfies $M_g(x) = \mu(x) \succeq 0$ for all $x\in I$.
Minimization of $y$ over the ellipsoid gives
\begin{align}
    \minimize_{\norm{y}_\Sigma \leq \sigma_0} g \matdot y = - \sigma_0 \sqrt{g^\dagger \Sigma g}.
\end{align}
Combining these results gives the dual problem as a second-order cone problem~\cite{luo1996duality}:
\begin{optproblem}{Dual spectral reconstruction, noisy data, SOCP form}
\label{eq:opt-problem-inexact-cone-dual}
\maximize_{g_t \in \hermmat{r}} \quad
& \sum_t g_t \matdot \hat{C}_t
  - \sigma_0 \sqrt{g^\dag \Sigma g} \\
\st \quad
& K(x) - \sum_t g_t b_t(x) \succeq 0,
\quad \forall x \in I.
\end{optproblem}
The first term is the exact-data dual objective evaluated on the measured central values.
The second term penalizes directions in data space that are poorly determined by the covariance matrix.
In the limit $\sigma_0\to0$, this penalty vanishes and \cref{eq:opt-problem-inexact-cone-dual} reduces to the exact-data dual problem.
For the finite-dimensional SDP reduction in \cref{sec:finite-diml-reduce}, it will be more convenient to use the SDP form \cref{eq:opt-problem-def-inexact-block-dual}.
A similar dual formulation was reached in Ref.~\cite{Lawrence:2024hjm};
\cref{eq:opt-problem-inexact-cone-dual} is equivalent to Eq.~(28) therein after optimization over the variable Ref.~\cite{Lawrence:2024hjm} refers to as $\mu$.

The noisy-data problem also has complementary-slackness conditions, which are useful as diagnostics of extremal solutions.
For any primal feasible $\rho$ and dual feasible $Z$, the following equality holds:
\begin{equation}
\label{eq:inexact-manifest-slackness}
    \mathcal{K}[\rho]
    =
    -\Gamma_0\matdot Z
    +
    \Gamma[\rho]\matdot Z
    +
    \int_I dx\, \rho(x)\matdot M_g(x),
\end{equation}
where $M_g(x)=K(x)-\sum_t g_t b_t(x)$ is the dual certificate from \cref{eq:dual-certificate-condition}.
The last two summands on the right-hand side are nonnegative for feasible primal and dual points.
Thus equality at the optimum requires the following two conditions:
\begin{align}
    \rho_*(x)M_{g^*}(x) &= 0, \quad \forall x \in I\\
    \Gamma[\rho_*]Z_* &= 0.
    \label{eq:gamma_Z_0}
\end{align}
The former is the same complementary slackness condition that appeared for exact data.
The latter condition describes how the extremal correlator $C[\rho_*]$ sits on the boundary of the statistical consistency region.
To see this, write
\begin{equation}
    \Delta C[\rho]
    =
    C[\rho]-\hat{C}.
\end{equation}
Taking the Schur complement of $\Gamma[\rho]$ shows that its lower-right Schur complement is
\begin{equation}
\label{eq:inexact-schur-complement}
    \sigma_0^2
    -
    \Delta C[\rho]^\dag \Sigma^{-1}\Delta C[\rho] \geq 0.
\end{equation}
When the optimal dual matrix $Z_*$ is nonzero and $\Sigma$ is nonsingular, complementary slackness forces the inequality to be saturated in order to satisfy \cref{eq:gamma_Z_0}.
That is, the extremal correlator lies on the boundary of the allowed ellipsoid,
\begin{equation}
    \Delta C[\rho_*]^\dag \Sigma^{-1}\Delta C[\rho_*]
    =
    \sigma_0^2.
\end{equation}
This result makes manifest the fact that the primal optimizer will generically not reproduce the measured central values.
Indeed, it identifies the positive spectral density whose correlator is statistically consistent with the data while extremizing the target observable.

%% file: sec4_finite-dim-reduction.tex
\section{Finite-dimensional reduction}\label{sec:finite-diml-reduce}

The dual problems in \cref{eq:opt-problem-def-exact-block-dual,eq:opt-problem-def-inexact-block-dual} have finitely many optimization variables but still contain infinitely many constraints:
for every point $x$ in the spectral domain, the dual certificate in \cref{eq:dual-certificate-condition} must be positive semidefinite.
This section explains when those infinitely many positivity conditions can be replaced by a finite set of semidefinite constraints.
The key point is that polynomial matrix positivity conditions in a single variable can be represented as a finite sum of squares~\cite{hilbert1888darstellung}.
Whenever the dual constraints can be written as polynomial matrix inequalities, the spectral reconstruction problem can be reduced to a finite-dimensional semidefinite program.

The formulation of spectral reconstruction problems \cref{eq:pmp-reduce-dual-moment,eq:pmp-reduce-dual-moment-rational,eq:pmp-reduce-NP-problem,eq:inexact-pmp-sdp-direct} as polynomial matrix problems that can be reduced to finite semidefinite programs is a central result of the present work.

\subsection{Polynomial matrix programs}

A polynomial matrix program (PMP) is a semidefinite program of the form~\cite{Simmons-Duffin:2015qma}
\begin{optproblem}{General polynomial matrix program}
\label{eq:PMP-def}
\maximize_{y \in \R^n} \quad & b \cdot y\\
\st \quad &
    M^0_j(x) + \sum_{k=1}^n y_k M^k_j(x) \succeq 0, \\
    & \forall x \in I, \quad j=1,\dots,J.
\end{optproblem}
Here $I\subseteq\R$ is an interval, the variables are $y_k\in\R$, and the matrices $M^k_j(x)$ are matrix-valued polynomials in $x$.
\Cref{eq:PMP-def} contains an affine matrix-valued constraint on the $y_k$.
The same structure appears in spectral reconstruction whenever the target kernel and basis functions are polynomial, or can be made polynomial by multiplying by a positive denominator.

A simple example is the exact-data moment problem.
On an interval $I$ (e.g. $I = [0, 1])$ and with basis functions $b_t(\lambda)=\lambda^t$, the exact dual problem is
\begin{optproblem}{Dual moment problem}
\label{eq:pmp-reduce-dual-moment}
\maximize_{g_t \in \hermmat{r}} \quad & \sum_t g_t \matdot \hat{C}_t \\
\st \quad &
    K(\lambda) - \sum_t g_t \lambda^t
        \succeq 0
        \quad \forall \lambda \in I.
\end{optproblem}
\noindent
If $K(\lambda)$ is a polynomial matrix, then \cref{eq:pmp-reduce-dual-moment} is already a PMP after expanding the Hermitian matrices $g_t$ in a basis.
Let $\{T_\alpha\}$ be a basis for $\hermmat{r}$, with $\alpha=1,\dots,r^2$, and write
\begin{equation}
\label{eq:g-t-basis-decompose}
    g_t = \sum_\alpha y_{t\alpha} T_\alpha.
\end{equation}
Substituting \cref{eq:g-t-basis-decompose} into \cref{eq:pmp-reduce-dual-moment} gives a polynomial matrix inequality that is affine in the variables $y_{t\alpha}$, and hence has the form of \cref{eq:PMP-def}.

Many kernels of physical interest are more naturally represented as rational functions than as polynomials.
This structure is still compatible with the PMP reduction, provided the denominator is positive on the interval where the constraint is imposed.
For example, suppose that in \cref{eq:pmp-reduce-dual-moment} the kernel takes the form
\begin{equation}\label{eq:Kernel_lambda_PMP}
    K(\lambda)
    =
    \frac{n(\lambda)}{d(\lambda)} e_0 e_0^\top,
\end{equation}
where $n,d\in\R[\lambda]$, $d(\lambda)>0$ for all $\lambda\in I$, and $e_0=(1,0,\dots,0)$.
Such kernels arise naturally when reconstructing a smeared spectral function for the first interpolating operator $\mathcal{O}_0$, with the remaining operators serving to capture additional spectral information.
The contributions of other operators can be isolated similarly.
Multiplying the dual constraint by $d(\lambda)$ preserves the order $\succeq$ and gives the equivalent PMP
\begin{optproblem}{Dual moment problem, rational kernel}
\label{eq:pmp-reduce-dual-moment-rational}
\maximize_{g_t \in \hermmat{r}} \quad & \sum_t g_t \matdot \hat{C}_t \\
\st \quad &
    n(\lambda)e_0e_0^\top
    -
    \sum_t g_t \lambda^t d(\lambda)
        \succeq 0
        \quad \forall \lambda \in I.
\end{optproblem}
\noindent

Exact knowledge of $G(z)$ as in \cref{eq:imag-G-eq-cauchy} instead allows a similar treatment for a different class of kernels, namely those which are rational in $E$, rather than in $\lambda=e^{-E}$.
Using the basis functions in \cref{eq:NP-basis-real,eq:NP-basis-imag}, the exact-data dual has the form
\begin{optproblem}{Nevanlinna--Pick-based interpolation, rational form}
\label{eq:pmp-reduce-NP-problem}
\maximize_{g_n^{\real},g_n^{\imag}} ~
& \sum_n
\left(
g_n^{\real} \matdot \real \hat{G}_n
+
g_n^{\imag} \matdot \imag \hat{G}_n
\right) \\
\st ~
& K(E)
-
\sum_n
\tfrac{(E-\real z_n)g_n^{\real} + (\imag z_n) g_n^{\imag}}
{(E-\real z_n)^2+(\imag z_n)^2}
\succeq 0,\\
~ & \forall E\in I.
\end{optproblem}
When $\imag z_n>0$, each denominator is positive on the real axis.
Multiplying by the product of the denominators therefore converts the constraint into a polynomial matrix inequality, provided the target kernel is polynomial or rational with a compatible positive denominator.
The price of this exact conversion is that the polynomial degree can become large as the number of interpolation points grows.

For kernels that are not polynomial or rational in the variable used by the basis functions, one may approximate the target kernel by a polynomial or rational function.
In fact the Stone--Weierstrass theorem, which says that a continuous function can be approximated uniformly by a polynomial, suggests that such approximations will often converge rapidly.
Regardless, such an approximation changes the target observable and should be treated as a separate source of systematic uncertainty.
The finite-dimensional reduction is exact for the approximating kernel, but not automatically for the original kernel.
In applications, the resulting error must therefore be separately bounded or otherwise included in the final uncertainty budget.

In contradistinction to existing linear methods like Refs.~\cite{Hansen:2019idp,Gambino:2020crt}, the degree of approximation in the current proposal is \emph{not} directly tied to the number of input data used.
However, for a fixed-order polynomial approximation to a given kernel, the \emph{strength} of the bounds coming from the dual certificate (\cref{eq:dual-certificate-condition}) is generically expected to vary strongly depending on the number and quality of the input data.

\subsection{Finite SDP reduction}

Once the dual problem has been written as a PMP, the remaining task is to replace the constraint in \cref{eq:PMP-def}, $M(x)\succeq0$ for all $x\in I$, by a finite number of semidefinite constraints.
For polynomial matrices of a single variable, this reduction is possible using sum-of-squares representations.
The precise form depends on whether $I$ is the full real line, a half-line, or a bounded interval.
The details are collected in \cref{app:finite-dim}.

Conceptually, the reduction works as follows.
Choose a polynomial basis $\{q_i(x)\}$ and form the matrix of products $Q_{ij}(x)=q_i(x)q_j(x)$.
Positivity of a polynomial matrix $M(x)$ on an interval can be represented by positive-semidefinite coefficient matrices $Y_a$ such that
\begin{equation}
\label{eq:sdp_reduction_of_positivity}
    M(x)
    =
    \text{a finite linear expression in } Y_a \text{ and } Q(x).
\end{equation}
Equating this expression to the polynomial matrix appearing in the PMP gives finitely many linear equations in the variables $y_k$ and the positive matrices $Y_a$.
The result is an ordinary finite-dimensional SDP.

For the exact-data problem, the final SDP has the following structure:
\begin{equation}
\label{eq:dual-pmp-sdp-exact}
\begin{aligned}
    \maximize_y & \quad b\cdot y \\
    \st \quad &
    \Tr[A_pY] + (By)_p = c_p,
    \quad p=1,\dots,P, \\
    & Y\succeq0.
\end{aligned}
\end{equation}
The matrices $A_p$, the linear map $B$, and the constants $c_p$ are determined directly by the polynomial identities used to enforce positivity.
Values for $A_p$, $B$, and $c_p$ can be found either by matching polynomial coefficients between \cref{eq:sdp_reduction_of_positivity} and \cref{eq:dual-pmp-sdp-exact} or by evaluating the polynomial identity at sufficiently many sample points.
Additional information on obtaining $A_p$, $B$, and $c_p$ is given in Sections~2.2 and 2.5.2 of Ref.~\cite{Simmons-Duffin:2015qma}.
Different interval choices and block decompositions change these matrices, but not the overall structure of the problem.

\subsection{Extensions: noisy data and piecewise kernels}\label{sec:thresholds}

The noisy-data dual problem in \cref{eq:opt-problem-def-inexact-block-dual} fits into the same framework.
The pointwise positivity constraint on the dual residual $K(x)-\sum_t g_t b_t(x)$ is reduced exactly as in the exact-data problem.
The only additional ingredient is the semidefinite matrix variable $Z$ that encodes the covariance ellipsoid.
After the PMP constraints are reduced to finite form, the resulting SDP contains both the sum-of-squares matrix variable $Y$ and the noisy-data matrix variable $Z$.
The result is:
\begin{optproblem}{Dual spectral reconstruction, finite SDP}
\label{eq:inexact-pmp-sdp-direct}
\maximize_{Z} \quad & -Z\matdot\Gamma_0 \\
\st \quad
& \Tr[A_pY] + (By)_p = c_p,
\quad p=1,\dots,P, \\
&
    \Tr[\hat{A}_{t\alpha}Z]-y_{t\alpha}=0,
\\
& 
    t=0,\dots,N_t-1,\quad
    \alpha=1,\dots,r^2,
\\
& Y\succeq0,\quad Z\succeq0.
\end{optproblem}
\noindent
The $y_{t\alpha}$ are the coefficients of $g_t$ in a real basis of $\hermmat{r}$, while the $\hat{A}_{t\alpha}$ isolate the entries of $Z$ as in \cref{eq:Z-g-embed}.
More explicitly, the matrices $\hat{A}_{t\alpha}$ can be defined by the relation
\begin{align}
    \sum_\alpha \Tr \left[ \hat{A}_{t\alpha} \begin{pmatrix}
        * & g \\ g^\dagger & *
    \end{pmatrix}\right] T_{ab}^\alpha = 2 g_{t,ab}.
\end{align}

The matrices $Y$ and $Z$ may also be combined into a larger block-diagonal semidefinite variable $Y' = \diag(Y, Z)$.
With a corresponding block-diagonal objective matrix $D = \diag(0, -\Gamma_0)$, the noisy-data problem fits the more general SDP form
\begin{equation}
\label{eq:dual-pmp-sdp-full}
\begin{aligned}
    \maximize_{y, Y'} \quad & D\matdot Y' \\
    \st \quad &\Tr[A'_pY']+(B'y)_p=c'_p,
    \quad p=1,\dots,P',\\
    &Y'\succeq0,
\end{aligned}
\end{equation}
where $A'_p, B', c_p'$ are appropriate combinations of the un-primed variables above.
In implementation, the block structure of $Y'$ should be preserved rather than treated as a dense matrix.
This is the same kind of block structure exploited in conformal-bootstrap SDP solvers.
The block structure may be important computationally but does not change the bound.

The same logic also handles kernels that are polynomial or rational only piecewise, which is particularly useful for observables with kinematic thresholds.
For example, inclusive decays often involve kernels of the form
\begin{equation}
\label{eq:threshold-kernel-example}
    K_{\text{thresh}}(E)
    =
    \theta(M-E)\tilde{K}(E),
\end{equation}
where $\theta(\cdot)$ is a Heaviside step function, $M$ is a threshold energy, and $\tilde{K}(E)$ is polynomial or rational on the relevant interval.
Instead of approximating the step function, the dual positivity condition can be split across the intervals on which the kernel has different analytic forms.
For $I=[0,\infty)$, the dual constraint
\begin{equation}
\label{eq:pmp-dual-threshold}
    K_{\text{thresh}}(E)-\sum_t g_t b_t(E)\succeq0
    \qquad \forall E\in[0,\infty)
\end{equation}
is equivalent to the pair of constraints
\begin{subequations}
\label{eq:pmp-dual-threshold-expanded}
\begin{align}
    \tilde{K}(E)-\sum_t g_t b_t(E)
    &\succeq0,
    \qquad E\in[0,M],
    \label{eq:pmp-dual-threshold-expanded-a}
    \\
    -\sum_t g_t b_t(E)
    &\succeq0,
    \qquad E\in[M,\infty).
    \label{eq:pmp-dual-threshold-expanded-b}
\end{align}
\end{subequations}
Each interval produces its own sum-of-squares block in the finite SDP.
Thus piecewise polynomial or rational kernels can be handled exactly at the level of the dual constraints, provided each piece is polynomial or rational in the variable used for the PMP reduction.

After the PMP reduction, the remaining numerical task is to solve a finite-dimensional SDP.
This work follows the primal-dual interior point strategy used in conformal-bootstrap calculations and in SDPB~\cite{Simmons-Duffin:2015qma,Landry:2019qug}.
Such methods solve primal and dual finite SDPs simultaneously.
The corresponding duality gap is computable and provides a certificate that the finite SDP has been solved to the desired precision.

%% file: secN_equivalence.tex
\section{Equivalence of bounds between different formulations \label{sec:equivalences}}

Once seemingly different problems have been expressed in the common language of convex optimization, it becomes clear their formulations are in fact equivalent.
This equivalence is a central result of the present work. 

For instance, let $\{\hat{C}_t\}$ be a set of moments specifying a nondegenerate truncated Hamburger moment problem with $I = \R$ and $b_t(\lambda) = \lambda^t$, and consider the problem of determining the \emph{Wertevorrat} defined by
\begin{equation}
\label{eq:wertevorrat-defn}
    \Delta_n(z) = \{ G(z) \mid C_t[\rho] = \hat{C}_t \},
\end{equation}
$G(z)$ is
the Stieltjes transform of spectral function $\rho$, cf. \cref{eq:stieltjes-trans-def}
Using the methods derived from the moment problem and Nevanlinna--Pick literature discussed in Refs.~\cite{Bergamaschi:2023xzx,Abbott:2025snz,Abbott:2026hfy}, it is possible to determine $\Delta_n(z)$ to be a circle with analytically calculable center and radius as functions of the known moments $\hat C_t$ only.
The derivation of the Wertevorrat is deeply steeped in the analytic properties of $G(z)$, and
it is therefore remarkable that the same information can be accessed via a SDP formulation that references only the positivity of the spectral function with a given support.

To see this fact concretely, consider for simplicity a scalar $G(z)$ and define the following family of linear functionals on $\rho$ parameterized by $\theta \in [0, 2\pi)$
\begin{gather}
    \mathcal{K}_\theta[\rho]
    = \real \left[ e^{i\theta} G(z) \right]
    = \int_I d\lambda \,
    \rho(\lambda) K_\theta(\lambda) , \\
    \label{eq:K_theta}
    K_\theta(\lambda) = \real \left(
        \frac{e^{i\theta}}{z-\lambda}
    \right).
\end{gather}
The parameter $\theta$ allows one to sweep over the real and imaginary parts of $G(z)$ in order to map out the complete allowed region.
Examples of equivalent bounds on $G(z)$ can now be stated precisely.

\begin{lemma}[Equivalence of moment bounds]
\label{example:moment_bounds}
Consider a non-extremal truncated scalar Hamburger moment problem ($\lambda \in \R$) with moments $\hat{C}_t$ where $t=0, \dots, N=2n$.
For a given fixed $z$ in the upper half-plane, $G(z)$ is constrained to lie within a convex set
\begin{align}
    \Delta_{N}(z) = \{ G(z) | \rho\geq 0~\text{has moments}~\hat{C}_0, \dots, \hat{C}_N \}
\end{align}
The following two descriptions for $\Delta_{N}(z)$ are equivalent:
\begin{enumerate}
    \item $\Delta_N(z)$ is a disk with known analytic functions for the center $c_N(z)$ and radius $r_N(z)$
    in terms of the input moments $\hat{C}_t$, given explicitly in Refs.~\cite{Kovalishina1984,Abbott:2025snz}.
    \item $\Delta_{N}(z)$ is the intersection of the bounds obtained from solutions of the semidefinite programs for $\theta \in [0,2\pi)$ 
    \begin{align}
    \label{eq:sdp_wertevorrat_ct}
    \begin{split}
        \minimize_{\rho \in \mathcal{M}_+(\R)}
        \quad & \mathcal{K_\theta}[\rho]\\
        \st \quad & C_t[\rho] = \hat{C}_t, \quad t \in 0,\dots N.
    \end{split}
    \end{align}
\end{enumerate}
\end{lemma}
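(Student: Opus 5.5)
The key observation is that $\Delta_N(z)$ is the image of the feasible set $\mathcal{F}=\{\rho\in\mathcal{M}_+(\R)\,:\,C_t[\rho]=\hat C_t,\ t=0,\dots,N\}$ under the linear map $\rho\mapsto G(z)\in\C\cong\R^2$. Because $\mathcal{F}$ is convex, so is its image. A convex set in $\R^2$ equals the intersection of its supporting half-planes; more precisely, its closure equals the intersection of half-planes $\{w:\real(e^{i\theta}w)\ge m_\theta\}$ over $\theta\in[0,2\pi)$, where $m_\theta=\inf_{\rho\in\mathcal{F}}\real[e^{i\theta}G(z)]=\inf_{\rho\in\mathcal{F}}\mathcal{K}_\theta[\rho]$. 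The right-hand side is exactly the optimal value of \cref{eq:sdp_wertevorrat_ct}. Description 2 is therefore, by construction, the closed convex hull of $\Delta_N(z)$. The proof then reduces to two tasks: showing that $\Delta_N(z)$ is closed, so no closure is lost, and showing that the support function $m_\theta$ coincides with that of the disk in description 1.

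For closedness I plan to use compactness. $\hat C_0$ bounds the total mass and $\hat C_N=\hat C_{2n}$ bounds $\int\lambda^{2n}d\rho$, so $\mathcal{F}$ is tight. By Prokhorov's theorem it is weakly sequentially compact. The moments up to order $2n-1$ are uniformly integrable, hence continuous under weak limits. The top moment is only lower semicontinuous, so a limit can lose mass at infinity in $\hat C_{2n}$. To deal with this I will use the standard fact for the truncated Hamburger problem: when the Hankel matrix is positive definite (the non-extremal case), the Wertevorrat is the closed Weyl disk, with the boundary attained by $\rho$ supported on $n+1$ points. Alternatively, I will avoid the issue by noting that the integrand $K_\theta(\lambda)=\real(e^{i\theta}/(z-\lambda))$ is bounded and continuous on $\R$ and vanishes at infinity. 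The infimum is then attained along weak limits, and any deficit in the top moment can be restored by adding mass at a far point with vanishing weight, which changes $\mathcal{K}_\theta$ negligibly. Either route gives that $\Delta_N(z)$ is a compact convex set equal to the intersection of its supporting half-planes.

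To identify the disk, I take the result of Refs.~\cite{Kovalishina1984,Abbott:2025snz} as given: the set of $G(z)$ is the disk $|w-c_N(z)|\le r_N(z)$. Its support function in direction $\theta$ is $\real(e^{i\theta}c_N)-r_N$. Since the two sets are equal as compact convex sets, their support functions agree, so $m_\theta=\real(e^{i\theta}c_N)-r_N$ for every $\theta$, and intersecting over $\theta$ recovers the disk exactly. Running the argument in reverse shows that description 2 determines the disk and hence $c_N$ and $r_N$. Strong duality (Slater's condition, available because the problem is non-extremal and $\hat C$ lies in the interior of the moment cone) guarantees that these primal values coincide with the dual SDP values, so the equivalence also holds at the level of the finite-dimensional programs.

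I expect the main obstacle to be the closedness and attainment step, namely the possible escape of the $2n$-th moment to infinity. I will first try the perturbation argument with mass placed at a distant point. If that becomes delicate, I will fall back on the classical parametrization of all solutions via Nevanlinna parameters, which shows directly that every boundary point of the disk is attained by an $(n+1)$-atomic measure.
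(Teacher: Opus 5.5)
Your overall route is the paper's: the SDP values $m_\theta$ are the support function of the set of attainable $G(z)$, the disk has support function $\real(e^{i\theta}c_N)-r_N$, and intersecting the half-planes returns $\bar{\D}(c_N,r_N)$. The gap is the step you single out as the main task. You want to show that $\Delta_N(z)$, with all $N+1$ moments imposed as equalities, is closed. That is false in every non-extremal case; this is why the paper's proof defines $\Delta_N(z)$ as a closure.

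Let $P_n$ be the monic orthogonal polynomial of degree $n$ for $\hat C$, and let $\rho_{\mathrm G}$ be the $n$-atomic Gauss measure on its zeros. It reproduces $\hat C_0,\dots,\hat C_{2n-1}$ but has $\int\lambda^{2n}d\rho_{\mathrm G}<\hat C_{2n}$. Because $[P_n(z)^2-P_n(\lambda)^2]/(\lambda-z)$ is a polynomial of degree $2n-1$ in $\lambda$, every $\rho\in\mathcal F$ satisfies $G_\rho(z)-G_{\rho_{\mathrm G}}(z)=P_n(z)^{-2}\int P_n(\lambda)^2\,d\rho(\lambda)/(\lambda-z)$. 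The imaginary part of that integral is $\imag z\int P_n^2\,|\lambda-z|^{-2}\,d\rho>0$, because $\int P_n^2\,d\rho>0$ for exact moments with a positive-definite Hankel matrix. Hence $G_{\rho_{\mathrm G}}(z)\notin\Delta_N(z)$, although it lies in $\cl\Delta_N(z)$.

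A concrete instance: take $\hat C=(1,0,1)$ and $z=i$. Then $\imag G(i)=\int d\rho/(1+\lambda^2)<1$ for every feasible $\rho$, since equality forces $\rho=\delta_0$. Yet the feasible measures $\frac{1}{1+a^2}\delta_a+\frac{a^2}{1+a^2}\delta_{-1/a}$ give $G(i)\to i=G_{\delta_0}(i)$ as $a\to\infty$. So $\Delta_2(i)=\bar{\D}(3i/4,1/4)\setminus\{i\}$, and the SDP in the direction maximizing $\imag G(i)$ has no optimizer.

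Each of your closedness arguments breaks at this one point.
\begin{itemize}
\item The claim that every boundary point is attained by an $(n+1)$-atomic solution fails at $G_{\rho_{\mathrm G}}(z)$. Those solutions only approach it as one node escapes to infinity.
\item Your far-point-mass construction changes $\mathcal K_\theta$ by a small amount, not by zero. Applied to $\rho_{\mathrm G}$, it proves exactly that the point lies in the closure, and no more. It also shifts the lower moments by up to $O(\delta/R)$, which you would have to repair separately using interiority of $\hat C$.
\item The Nevanlinna-parameter fallback cannot produce a value that no exact solution attains.
\end{itemize}

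The fix is what the paper does: state the result for $\cl\{G_\rho(z):\rho\in\mathcal F\}$, and read the cited disk theorem the same way. Alternatively, relax the top constraint to $\int\lambda^{2n}d\rho\le\hat C_{2n}$. That constraint survives weak limits by lower semicontinuity, so your Prokhorov argument then gives a weakly compact feasible set whose image is exactly the closed disk. A convex set and its closure have the same support function, so $m_\theta=\real(e^{i\theta}c_N)-r_N$ holds in either formulation. The intersection of half-planes is then $\bar{\D}(c_N,r_N)=\cl\Delta_N(z)$; closedness of $\Delta_N(z)$ itself is neither needed nor true.
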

\noindent
A proof of \cref{example:moment_bounds} is sketched in \cref{app:equivalence_proof}.
For matrix-valued problems, the set $\Delta_N(z)$ is generically a Weyl matrix ball.
Similar results hold for moment problems over restricted intervals provided the interval is imposed in both setups.

The equivalences also directly extend to the dual formulation. For instance, Nevanlinna--Pick interpolation can be reframed as a dual optimization problem as follows.

\begin{lemma}[Equivalence of interpolation bounds]
\label{example:interpolation_bounds}
Consider a non-extremal complex interpolation problem in which $G(z)$ is specified at a finite set of interpolation nodes $\{z_n\}$ in the upper half-plane, $G(z_n)=\hat{G}_n$ for $n=0, \dots, N$.
For a given $z \notin \{z_n\}$ in the upper half-plane, $G(z)$ is constrained to lie within a convex set \begin{align}
    \Delta_{N}(z) = \{ G(z) | \rho \in \mathcal{M}^1_+([0, \infty)),~G(z_n)=\hat{G}_n \}.
\end{align}
The following descriptions for $\Delta_N(z)$ are equivalent:
\begin{enumerate}
    \item $\Delta_N(z)$ is the intersection of the two disks obtained by solving the Nevanlinna--Pick interpolation problem on the nodes $\{z_n\}$ for the functions $G(z)$ and $z G(z)$~\cite{krein1998interpolation}.
    Explicit analytic formulas for $\Delta_N(z)$ are given in Ref.~\cite{Bergamaschi:2023xzx} and the references therein.
    \item $\Delta_N(z)$ is the interior of the intersection of bounds obtained from solutions of the semidefinite programs for $\theta\in [0, 2\pi)$
    \begin{align}
    \begin{split}
    \maximize_{g_n^{\real},g_n^{\imag}} ~
    & \sum_n
    \left(
        g_n^{\real} \real \hat{G}_n
        + g_n^{\imag} \imag \hat{G}_n
    \right) \\
    \st ~ & K_\theta(E)
    - \sum_n
        \tfrac
        {(E-\real z_n)g_n^{\real} + (\imag z_n) g_n^{\imag}}
        {(E-\real z_n)^2+(\imag z_n)^2}
        \succeq 0,\\
        &\quad \forall E\in [0,\infty).
    \end{split}
    \end{align}
\end{enumerate}
\end{lemma}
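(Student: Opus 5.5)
The argument follows the same template as \cref{example:moment_bounds}, with Lagrange duality placed in the middle. The set $\Delta_N(z)=\{G(z)\}$ is the image of the convex feasible set $\{\rho\in\mathcal{M}^1_+([0,\infty)) : G(z_n)=\hat{G}_n\}$ under the linear map $\rho\mapsto G(z)$. It is therefore convex, and since the problem is non-extremal it is compact with nonempty interior. Any compact convex set in $\C\simeq\R^2$ equals the intersection of its supporting half-planes $\{w : \real(e^{i\theta}w)\ge h(\theta)\}$, where $h(\theta)=\min_\rho \mathcal{K}_\theta[\rho]$ is (minus) the support function. So the plan has three parts: first, identify $h(\theta)$ with the optimal value of the displayed dual program; second, identify the convex set with the two-disk intersection from the classical Nevanlinna--Pick/Krein theory; third, invoke the support-function characterization.

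\emph{Step 1 (duality).} The constraints $G(z_n)=\hat G_n$ are exactly $C_n[\rho]=\hat C_n$ with the basis functions \cref{eq:NP-basis-real,eq:NP-basis-imag}. Hence the primal problem $\min \mathcal{K}_\theta[\rho]$ is an instance of \cref{eq:opt-problem-def-exact-block}, and its dual is \cref{eq:pmp-reduce-NP-problem} with $K=K_\theta$, which is the program in item 2. Weak duality, \cref{eq:weak-duality}, gives dual value $\le h(\theta)$. For equality we need strong duality, which I would obtain from a Slater-type argument. Non-extremality means the Pick matrices for $G$ and $zG$ are strictly positive definite, so the data vector lies in the interior of the moment cone $\{(C_n[\rho])_n:\rho\ge0\}$. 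The functions $b_n$ and $K_\theta$ are continuous and decay like $1/E$, so after compactifying $[0,\infty]$ the standard generalized-moment-problem result (Isii / Karlin--Studden) applies. Interior data then gives zero duality gap and attainment on both sides.

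\emph{Step 2 (classical description).} For a Stieltjes-type measure on $[0,\infty)$, Krein's theorem states that the admissible $G(z)$ are exactly those for which both $G$ and $\zeta G(\zeta)$ are Pick functions interpolating the data $\hat G_n$ and $z_n\hat G_n$. The Nevanlinna--Pick parametrization (a linear-fractional map of an arbitrary Pick parameter) then shows that each condition alone yields a Weyl disk at $z$. One must check that the set of achievable values is \emph{exactly} the intersection of the two disks, not a subset of it. This is the main obstacle. The parameters for the two disks are not independent, so surjectivity onto the intersection is not automatic. I would try to sidestep it using convexity. Every boundary point of the intersection lies on the boundary of one disk, and each such point is attained by a canonical, extremal measure with finitely many atoms that also satisfies the other constraint; these finite-support solutions correspond to zeros of the optimal certificate, as in \cref{eq:rho-opt-compl-constrainted}. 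Convex combinations of these measures then fill the interior.

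\emph{Step 3.} With Step 1 in hand, the support function of $\Delta_N(z)$ equals the dual optimal values over $\theta\in[0,2\pi)$. Intersecting the half-planes $\real(e^{i\theta}w)\ge h(\theta)$ therefore reproduces the closure of $\Delta_N(z)$, and by Step 2 this is the two-disk intersection. The word ``interior'' in item 2 accounts for which boundary points are attained; by attainment in Step 1, the boundary values are in fact achieved by the extremal measures. Nothing beyond standard convex-geometry facts is needed here.
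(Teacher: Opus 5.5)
Your Steps 1 and 3 follow the paper's intended route, which is the template of its proof of \cref{example:moment_bounds}. The SDP optimum is the support function $\ell(\theta)$ of $\Delta_N(z)$, and the closure of a planar convex set is the intersection of its supporting half-planes. Your Step 1 usefully makes explicit the strong-duality link that the paper leaves to its Slater discussion.

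The gap is in Step 2. You correctly identify surjectivity onto $D_1\cap D_2$ as the crux, but the proposed sidestep restates it rather than proving it. Take $w\in\partial D_1\cap D_2$. The Pick-class solution of the $G$-problem with $G(z)=w$ is unique, because its Nevanlinna parameter must be a real constant (or $\infty$); it is a specific rational Pick function with real poles. That its value at the single point $z$ lies in $D_2$ says nothing about whether those poles lie in $[0,\infty)$. Yet that is exactly what ``also satisfies the other constraint'' asserts.

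The appeal to zeros of an optimal certificate is circular. Complementary slackness characterizes minimizers of $\mathcal{K}_\theta$ over the true feasible set, i.e.\ support points of $\Delta_N(z)$. Using it to reach the boundary of $D_1\cap D_2$ presupposes that the two sets have the same support function, which is the claim. Convex combinations only fill the interior once the whole boundary is known to be attained, and $\Delta_N(z)\subseteq D_1\cap D_2$ is the easy inclusion.

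The paper does not prove item 1 at all. It takes the two-disk description as a cited classical theorem, exactly as \cref{example:moment_bounds} takes Kovalishina's disk, and you should either do the same or give Krein's argument. In that argument, all solutions are $G=(A\tau+B)/(C\tau+D)$ with $\tau$ in the Stieltjes class. The values $\tau(z)$ fill the closed sector between $[0,\infty)$ and the ray through $-1/z$, which is an intersection of two half-planes. Its M\"obius image is therefore an intersection of two disks, and the classical theory identifies these with the disks of item 1.

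A secondary point concerns Step 1: compactifying to $[0,\infty]$ does not give primal attainment via Karlin--Studden. Every $b_n$ and $K_\theta$ vanishes at $\infty$, so no combination is strictly positive there and the feasible set is not weak-$*$ compact. For instance, a mass $aT$ at $T\to\infty$ produces in the limit a constant term $a$ that no measure realizes. The interior-point condition still gives zero duality gap and dual attainment, which is all Step 3 needs. However, your closing claim that all boundary values are achieved is unsupported. This is why the paper takes a closure in the proof of \cref{example:moment_bounds} and why the statement speaks of the ``interior''.
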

\noindent

The proof for \cref{example:interpolation_bounds} is
very similar to that of \cref{example:moment_bounds} and thus omitted.
Similar equivalences are expected to hold for arbitrary intervals $I\subseteq \R$ provided the restriction to $I$ is included consistently on both sides.
For the bounds coming moment problems, the restriction is generically expected to be phrased in terms of intersections of Weyl matrix balls as in Ref.~\cite{heide2024weyl}.

Note that the SDP and analyticity-based formulations here provide complementary approaches to the same problem:
the analyticity-based approaches do not require an optimization step, and directly give the full structure of the Wertevorrat, whereas the SDP formulation readily generalizes to any polynomial or rational kernel and to noisy data.
Which approach is more useful for a given calculation should be determined on a case-by-case basis.

%% file: sec5_numerics.tex
\section{Numerical results}\label{sec:numerics}

This section presents numerical demonstrations of the bounds developed above.
The examples below use a custom Python implementation with arbitrary-precision arithmetic from \texttt{mpmath}~\cite{mpmath}.
Implementation details, including the SDP forms and interval positivity certificates, are given in \cref{app:finite-dim,sec:primal-dual}.

\subsection{Toy spectral density and input data}\label{sec:toy-model}

The toy spectral density is chosen to have a nontrivial matrix structure while remaining simple enough to evaluate exactly.
For $a,b\in\{0,1\}$, define
\begin{equation}
\label{eq:rho-toy-defn}
    \rho^{\text{toy}}_{ab}(E)
    =
    \sum_{k=0}^{N_\text{state}-1}
    Z_{ka}^* Z_{kb}
    \delta(E-E_k),
\end{equation}
with evenly spaced energies
\begin{equation}
    E_k = \frac{1}{10}(k+1).
    \label{eq:toy_spectrum}
\end{equation}
and $N_\text{state}=96$.
The overlap factors are
\begin{align}
Z_{n0}
&=
\begin{cases}
+\tfrac{1}{5} & n=0,\\
+1 & n \text{ odd},\\
-\tfrac{1}{5} & n>0 \text{ even},
\end{cases}
&
Z_{n1}
&=
\begin{cases}
1 & n \text{ even},\\
\tfrac{1}{5} & n \text{ odd}.
\end{cases}
\end{align}
These overlap factors are chosen to mirror a situation where the operator of interest has a relatively low overlap onto the ground state.
This is a common source of difficulty in spectral applications, since the target observable may be sensitive to low-energy states that are not cleanly isolated by the target operator.

The Euclidean correlator generated by \cref{eq:rho-toy-defn} is
\begin{equation}
\label{eq:toy-correlator}
    C_{t,ab}
    =
    \sum_{k=0}^{N_\text{state}-1}
    Z_{ka}^*Z_{kb} e^{-E_k t}.
\end{equation}
In all examples below, the correlator is sampled at $N_t=20$ time slices.
The scalar-correlator bounds use only the $C_{00}(t)$ component.
The matrix-correlator bounds use the full $2\times2$ correlator matrix, while the target observable is still chosen to involve the same component of the spectral density as in the scalar case.
This makes the scalar and matrix calculations directly comparable: the target is fixed, while the amount of positive-semidefinite input data is varied.

For moment-problem inputs, the data are the Euclidean correlator values in \cref{eq:toy-correlator}.
For inputs to Nevalinna--Pick-based interpolation, we instead use the Stieltjes transform of the same toy density,
\begin{equation}
\label{eq:toy-stieltjes}
    G_{ab}(z)
    =
    \sum_{k=0}^{N_\text{state}-1}
    \frac{Z_{ka}^*Z_{kb}}{E_k-z}.
\end{equation}
The transform is evaluated on the imaginary axis at $z=i\omega_n=2\pi i n/N_t$ for $n=0,\dots,N_t/2-1$. 
For numerical convenience, the point at zero frequency is shifted slightly to $\omega_0=10^{-3}\omega_1$, avoiding a point on the boundary of the upper half plane.
The moment-problem and Nevalinna--Pick-interpolation calculations therefore use different finite sets of input functionals of the same underlying toy density. 

To model statistical uncertainty, the covariance matrix is taken to be
\begin{equation}
\label{eq:toy-covariance}
    \Sigma_{tab,t'a'b'} = \covscale^2 
    \left[
        s+(1-s)\delta_{tt'}\delta_{aa'}\delta_{bb'}
    \right]
    \Sigma_{tab,t'a'b'}^{(0)}
\end{equation}
where $\covscale$ is an overall scale parameter, $s=1/2$ is a shrinkage parameter, and $\Sigma^{(0)}$ is defined by
\begin{equation}
    \Sigma_{tab,t'a'b'}^{(0)} =
    \hat{C}_{tab}\hat{C}_{t'a'b'} e^{-\gamma|t-t'|},
\end{equation}
with $\gamma^{-1}=1.3$.
For each numeric example, $\hat{C}_t$ is 
is drawn from a multivariate normal distribution
$\hat{C} \sim \mathcal{N}(C, \Sigma)$.

For simplicity, the covariance matrix is taken to have the same structure in the time domain and in the upper half-plane.
The cutoff $\sigma_0$ (cf.\cref{eq:C-hatC-stat-conisistent}) is taken to be
\begin{equation}
\sigma_0 = \sqrt{2 N_\text{corr}} \covscale,
\label{eq:cutoff_scaling}
\end{equation}
which roughly corresponds to a cutoff of 2 on the $\chi^2$ per ``degree of freedom."

\subsection{Spectral Reconstruction with Cauchy kernel}

This section presents results for a Cauchy-smeared spectral reconstruction,
\begin{align}
    \tilde{\rho}_\epsilon(E) 
    &= \int d\lambda\, \tilde{K}^{\rm Cauchy}_{E,\epsilon}(\lambda) \tilde{\rho}(\lambda) \\
    \tilde{K}^\text{Cauchy}_{E, \epsilon}(\lambda) &= \frac{\epsilon}{(\lambda - e^{-E})^2 + \epsilon^2}
    \label{eq:cauchy-defn-moment-numerics}
\end{align}
where $\tilde{\rho}(\lambda) = \lambda \rho(-\log \lambda)$ is the $\lambda$-space spectral function as in \cref{eq:corr-moment-rep-hausdorff}.
Since $\tilde{K}^\text{Cauchy}_{E, \epsilon}$ is a rational function of $\lambda$, the problem of bounding $\rho_\epsilon(E)$ can reformulated exactly as a finite-dimensional SDP via \cref{eq:pmp-reduce-dual-moment-rational}.
By \cref{eq:imag-G-eq-cauchy}, the parameter $\epsilon$ corresponds geometrically to a fixed distance above the real line in the complex-$\lambda$ plane.

\cref{fig:cauchy-sweep} shows the resulting bounds from solving \cref{eq:pmp-reduce-dual-moment-rational}.
The reconstruction used the toy-model inputs for $\hat{C}_t$ 
(cf. \cref{sec:toy-model} above)
on $N_t=20$ total timeslices.
A fixed value of $\epsilon=0.1$ was used for the choice of the smearing kernel.
As a stand-in for precise but realistic Monte Carlo data, the noise level of the input data is taken to be $\covscale = 10^{-4}$.

\cref{fig:cauchy-sweep} shows that the smeared spectral function $\tilde{\rho}_\epsilon(E)$ is bounded with good precision.
As expected, the bounds from the matrix-valued input data are generally stronger than from scalar-valued inputs.\footnote{The fact that the matrix bounds are not \emph{always} better is a consequence of choosing to scale the cutoff $\sigma_0$ with $N_{\rm corr}$, which is different in the scalar and matrix cases.
For fixed cutoff $\sigma_0$, the matrix bound will always be at least as tight as the scalar bound.
}
More generally, \cref{fig:cauchy-sweep} demonstrates how these bounds can be used with noisy input data to constrain energy-dependent smeared spectral quantities. 
As the preceding sections show, the convex-optimization formulation offers the freedom to choose kernels besides \cref{eq:cauchy-defn-moment-numerics}.

\begin{figure}
    \centering
    \includegraphics[width=\linewidth]{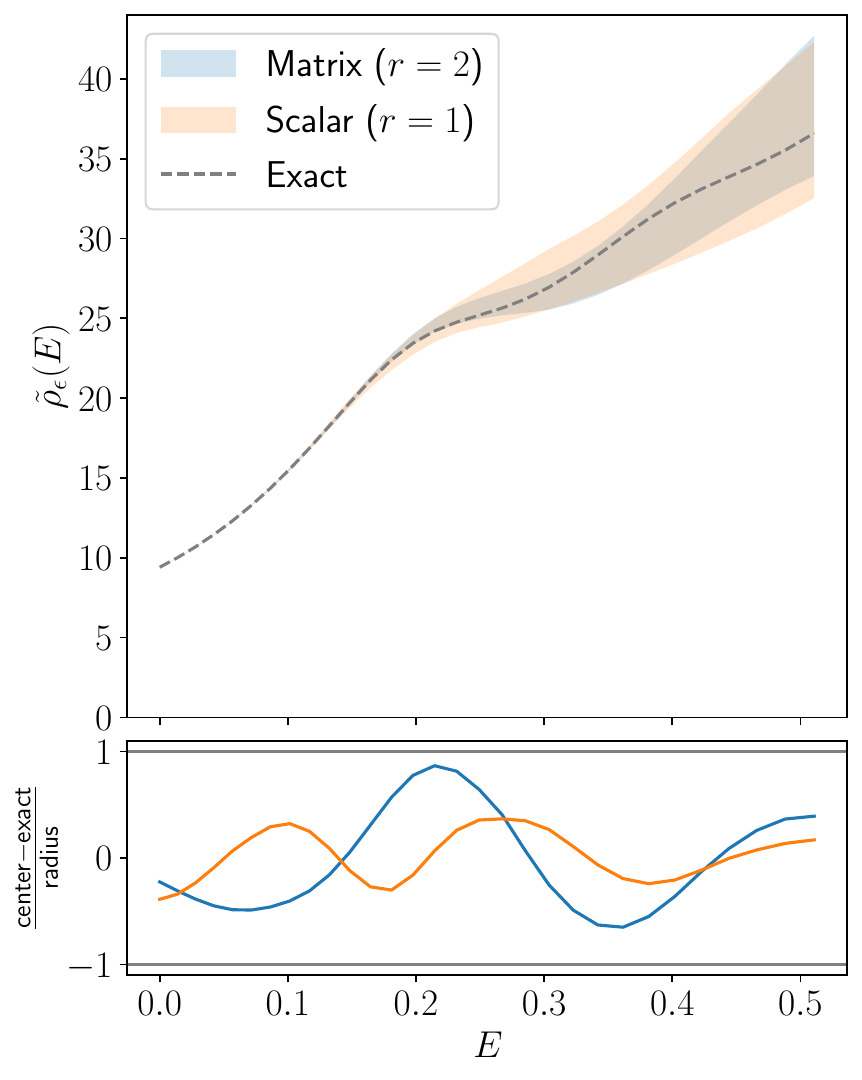}
    \caption{Bounds on the smeared spectral function with kernel $\tilde{K}^\text{Cauchy}$ as defined in \cref{eq:cauchy-defn-moment-numerics}.
    }
    \label{fig:cauchy-sweep}
\end{figure}

\subsection{Toy model for inclusive $\tau$ decay}\label{sec:tau-decay}

Inclusive hadronic decays of the $\tau$ lepton provide a useful example of a threshold spectral observable.
They are phenomenologically interesting because they can constrain CKM matrix elements and because persistent tensions with CKM unitarity have motivated renewed lattice interest; see Refs.~\cite{HeavyFlavorAveragingGroupHFLAV:2024ctg,FlavourLatticeAveragingGroupFLAG:2024oxs} and references therein.
The inclusive hadronic decay rate can be written in terms of longitudinal and transverse spectral densities as~\cite{Evangelista:2023fmt,ExtendedTwistedMass:2024myu}
\begin{equation}
\label{eq:R_tau}
    \frac{R^{(\tau)}_{uq}}{|V_{uq}|^2}
    \propto
    \sum_{X=T,L}
    \int_0^\infty dE\,
    K_X\!\left(\frac{E}{m_\tau}\right)
    E^2 \rho_{X,uq}(E),
\end{equation}
where $q=d,s$ and the omitted prefactors are known.
The kernels are
\begin{align}
    K_L(x)
    &=
    \frac{1}{x}(1-x^2)^2\theta(1-x),
    \\
    K_T(x)
    &=
    (1+2x^2)K_L(x).
\end{align}
The step function encodes the kinematic threshold at $E=m_\tau$.

For the numerical demonstration, we isolate a single channel and define
\begin{equation}
\label{eq:RX-tau-defn}
    R_X
    =
    \int_0^\infty dE\,
    K_X\!\left(\frac{E}{m_\tau}\right)
    E^2 \rho_X(E),
    \qquad X\in\{L,T\}.
\end{equation}
Setting $E^2\rho_X(E)=\rho^\text{toy}(E)$ allows computation of bounds on $R_X$.
Numerical results will be presented below for $R_L$ only with $m_\tau=0.35$.

For analyticity-based interpolation, the kernel is a rational function of the energy variable on each side of the threshold, so the piecewise-PMP treatment of \cref{sec:thresholds} can be applied directly.
In the moment-problem formulation, a change of variables $\lambda=e^{-E}$ converts the spectral representation of the correlator to
\begin{equation}
\label{eq:CX-tau-moment-def}
    C_X(t)
    =
    \int_0^1 d\lambda\,\tilde{\rho}_X(\lambda)\lambda^t,
\end{equation}
and the target observable becomes
\begin{equation}
\label{eq:RX-tau-moment}
    R_X
    =
    \int_0^1 d\lambda\,
    \tilde{K}_X(\lambda)\tilde{\rho}_X(\lambda).
\end{equation}
The transformed kernel $\tilde{K}_X(\lambda)$ is not polynomial in $\lambda$, so the moment-problem calculation uses a rational approximation to the target kernel. 
For the numerical test, a total of $N_t=20$ timeslices are used. 
For simplicity, an order-20 Taylor expansion of $(1 - \lambda) \tilde{K}_X(\lambda)$ around $\lambda=1$ is used to approximate the target kernel.
For the parameters used here, this approximation reproduces the exact toy value of $\mathcal{K}[\rho]$ to better than one part in $10^{11}$.
This approximation error is therefore negligible for the present demonstration.
Note that the approximation order of the kernel may be increased arbitrarily irrespective of $N_t$.

The kernel has an infrared singularity in the energy variable.
For the toy model, this is avoided by imposing a conservative lower edge of support, $m_\text{gap}=0.05$, which is half the true spectral gap in \cref{eq:toy_spectrum}.
For Nevalinna--Pick-based interpolation, the interval is restricted to $E \in [m_\text{gap},\infty)$.
For moment inputs the corresponding interval is $\lambda\in[0,e^{-m_\text{gap}}]$.
This restriction is part of the definition of the the interval $I$ in \cref{eq:pmp-reduce-dual-moment-rational}.

\begin{figure}
    \centering
    \includegraphics[width=\linewidth]{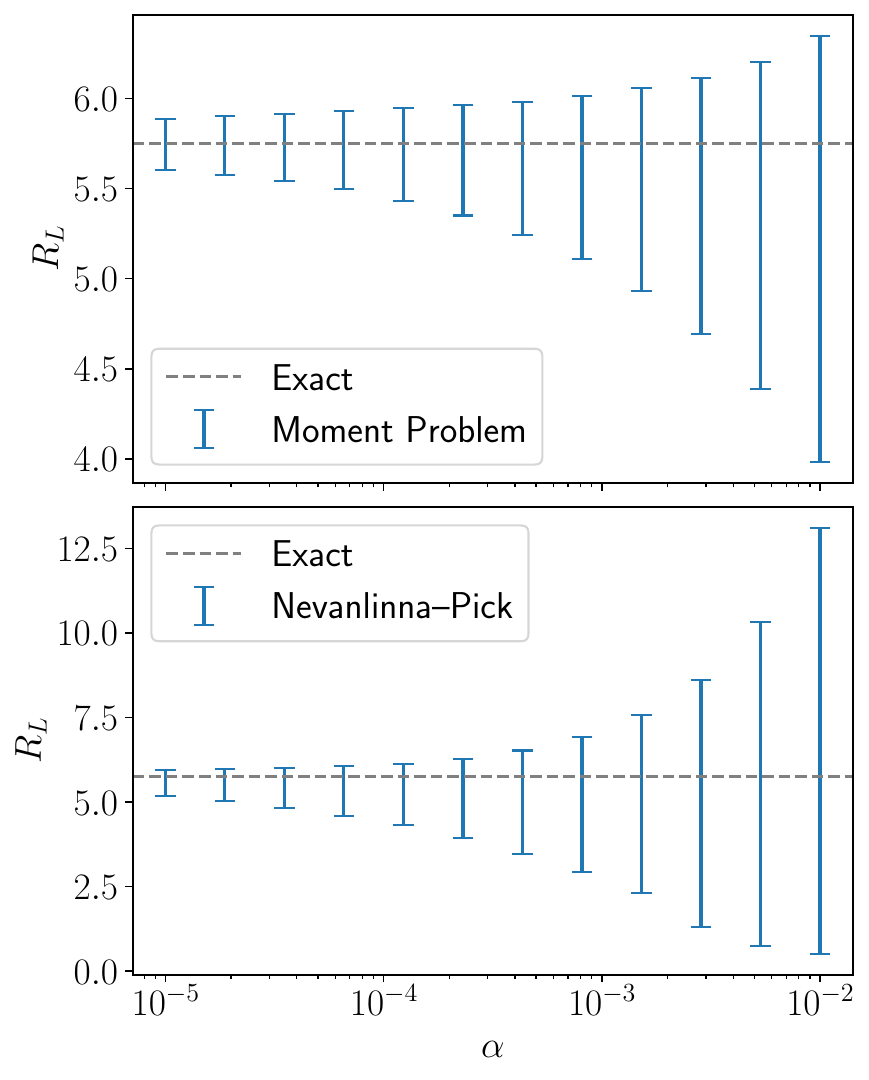}
    \caption{Bounds on the toy inclusive $\tau$-decay observable $R_L$ defined in \cref{eq:RX-tau-defn}.
    The central data are generated from the exact toy spectral density in \cref{eq:rho-toy-defn}, and the input uncertainty is varied through the modeled compatibility region.
    The difference between the size of the bounds is discussed in the main text.
    }
    \label{fig:tau-decay}
\end{figure}

\Cref{fig:tau-decay} shows the resulting bounds on $R_L$ from solving the matrix ($r=2$) versions of \cref{eq:pmp-reduce-dual-moment-rational} and \cref{eq:pmp-reduce-NP-problem} using moment-problem and NP-based inputs, respectively.
In all cases, the bounds contain the exact value computed directly from the toy spectral density. 
The bounds shrink as the input uncertainty is reduced, as expected from the decreasing size of the feasible set.
For this particular observable and finite input data, the moment formulation happens to give substantially tighter bounds than the NP-based formulation.
This should not be interpreted as an intrinsic hierarchy between the methods.
Rather, the two calculations impose different finite-data constraints on the same underlying toy density. 

\subsection{Toy model for hadronic vacuum polarization}\label{sec:HVP-numerics}

The next numerical demonstration models the calculation of the leading-order hadronic vacuum polarization contribution to the anomalous magnetic moment of the muon using the toy model.
This observable can be written in spectral form as
\begin{equation}
\label{eq:hvp-target}
    a_\mu^{\text{HVP}}
    =
    \left(\frac{\alpha_{\text{EM}}}{\pi}\right)^2
    \int_0^\infty ds\, K(s)\rho(s),
\end{equation}
where the kernel $K(s)$ is known analytically in terms of the squared energy $s$ and is reviewed in \cref{sec:hvp-explicit} for completeness.
The corresponding Euclidean correlator admits the representation
\begin{equation}
\label{eq:corr-R-ratio-spectral}
    C(t)
    =
    \frac{1}{2}
    \int_0^\infty ds\,\sqrt{s}\rho(s)e^{-\sqrt{s}t}.
\end{equation}
Changing variables to $\lambda=e^{-\sqrt{s}}$ converts this into a Hausdorff moment problem,
\begin{equation}
\label{eq:hvp-hausdorff-moment}
    C(t)
    =
    \int_0^1 d\lambda\,\tilde{\rho}(\lambda)\lambda^t,
\end{equation}
with
\begin{equation}
    \tilde{\rho}(\lambda)
    =
    \frac{1}{\lambda}(-\log\lambda)^2
    \rho\!\left((-\log\lambda)^2\right).
\end{equation}
In terms of this transformed density,
\begin{equation}
    a_\mu^{\text{HVP}}
    =
    \int_0^1 d\lambda\,
    \tilde{\rho}(\lambda)\tilde{K}(\lambda),
\end{equation}
where
\begin{equation}
    \tilde{K}(\lambda)
    =
    \frac{2}{-\log\lambda}
    K\!\left((-\log\lambda)^2\right).
\end{equation}

As in the preceding example, the transformed kernel is approximated before applying the PMP reduction.
For simplicity, the numerical test uses a 20th-order Chebyshev approximation as implemented in \textsc{mpmath}~\cite{mpmath}.
As above, the approximation error on the kernel was verified to be negligible and can be reduced arbitrarily.
The HVP kernel is also singular at threshold, and the upper-bound calculation is stabilized by imposing the same explicit mass gap $m_\text{gap}=0.05$.
The purpose of the example is not to optimize the treatment of the physical HVP kernel, but to test how the bounds behave for a phenomenologically motivated kernel with nontrivial infrared structure.

\begin{figure}
    \centering
    \includegraphics[width=\linewidth]{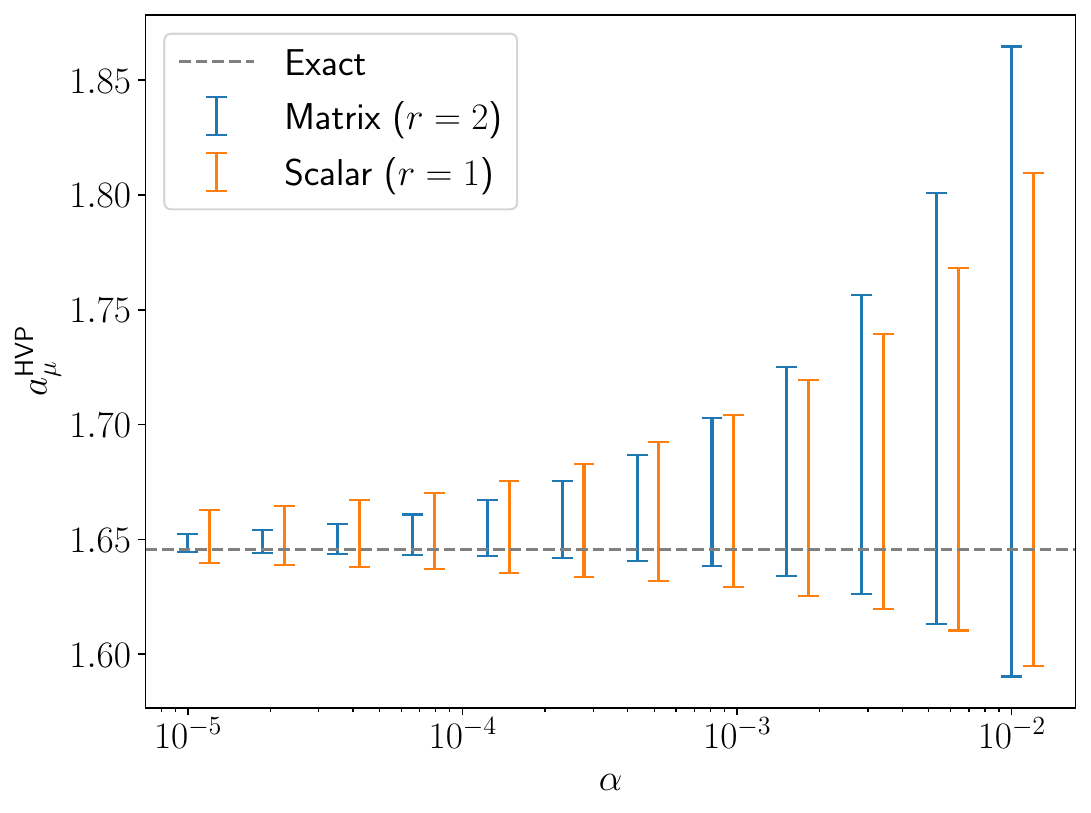}
    \caption{Bounds on the toy HVP-like observable defined in \cref{eq:hvp-target}.
    The scalar bounds use only the $C_{00}(t)$ component of the toy correlator, while the matrix bounds use the full $2\times2$ correlator matrix for the same target spectral component.
    }
    \label{fig:hvp-bounds}
\end{figure}

\Cref{fig:hvp-bounds} shows the bounds resulting from solutions to the scalar (r=1) and matrix $(r=2)$ versions of \cref{eq:pmp-reduce-dual-moment-rational} using input data on $N_t=20$ timeslices.
The exact value computed using the toy spectral function lies within the computed bounds throughout the noise range considered.
As above, the cutoff $\sigma_0$ is scaled for the the scalar and matrix cases according to \cref{eq:cutoff_scaling}.
For larger input uncertainties (roughly the right-hand side of \cref{fig:hvp-bounds}), the scalar and matrix problems comparable bounds.
As the input uncertainty is reduced, the bounds from the matrix problem become stronger, in agreement with expectations.

This behavior illustrates an important qualitative point.
Adding operators is generally expected to improved spectral bounds, since the matrix positivity constraint contains additional information that is not present in a scalar correlator.
On the other hand, a detailed statistical comparison necessarily depends both on the statistical properties of the inputs (via the covariance matrix) and the choice of statistical cutoff (via $\sigma_0$).
The choice of ``equivalent" $\sigma_0$ values for different problem sizes must therefore be made with care.
In this sense, matrix-valued correlators should be viewed as a promising source of additional constraints, not as an automatic improvement independent of statistics and the overall computational cost of generating the matrix of correlators.

%% file: sec6_comparison.tex
\section{Comparison to other methods}\label{sec:comparison}

This section compares the spectral-bootstrap construction with related reconstruction, analyticity, and optimization methods.

\subsection{Kernel reconstruction methods}\label{sec:relation-BG-HLT}

The dual formulation provides an interesting connection to methods that approximate the target smearing kernel as a linear combination of basis functions in order to infer $\mathcal{K}[\rho]$ from the linear combination of the same expansion coefficients with $\hat{C}_t$~\cite{Barata:1990rn,Hansen:2019idp,Gambino:2020crt}.
For simplicity, consider a scalar spectral reconstruction problem with kernel $K:I\to\R$ and basis functions $b_t:I\to\R$.
From a given approximation of the target smearing kernel
\begin{equation}
\label{eq:backus-gilbert-kernel-eq-g}
    \overline{K}(x)
    =
    \sum_{t=1}^{N_t} g_t b_t(x),
\end{equation}
the spectral observable smeared with the approximate $\overline{K}(x)$ is obtained directly from the correlator data:
\begin{equation}
\label{eq:backus-gilbert-kernel-reconstruction}
    \overline{\mathcal{K}}[\rho] =
    \int_I dx\, \rho(x) \overline{K}(x)
    =
    \sum_{t=1}^{N_t}  g_t C_t.
\end{equation}
A common approach is to choose coefficients $g_t$ that approximate the target kernel, while controlling the noise amplification on $\overline{\mathcal{K}}[\rho]$.\footnote{
Note that Refs.~\cite{Bruno:2023bue,Bruno:2024fqc,lupo2026extractionspectraldensitieslattice} decouple the choice of such regulator from the covariance matrix.
Different choices of regulator can be mapped to a corresponding measure of the distance of $C[\rho]$ from $\hat{C}$ in the definition of primal problems and might be worth studying in the future.
}
Often, a covariance penalty is introduced to tame large fluctuations in the coefficients $g_t$.
We note that the noisy-data dual in \cref{eq:opt-problem-inexact-cone-dual} contains a similar term. 
While this term might be interpreted as yielding a similar effect, it bears emphasizing that the present bootstrap formulation chooses $\sigma_0$ as part of the definition of the statistical consistency region;
$\sigma_0$ is not tuned as part of a bias-variance tradeoff.

The dual bootstrap problem provides a similar kernel-reconstruction interpretation, but with two major advantages. 
The first is the fact that, should an approximation of the kernel be required for PMP reduction, the order of the approximation is conceptually unrelated to the number of available input data. 
This approximated kernel will be denoted $\underline{K}$ to distinguish it from $\overline{K}$ above.
The same holds for approximating rational functions. 
We emphasize that the need of quantifying the effects of such approximations, $\overline K$ and $\underline{K}$, is common to both approaches. 

The second advantage is related to the actual role of $N_t$, which instead controls the approximation order of the \emph{bounding} polynomial, cf. \cref{eq:pmp-reduce-dual-moment}.
The upshot is that \emph{exact} inequalities replaces the approximate equality above.

For a lower bound, the pointwise dual feasibility condition is
\begin{equation}
\label{eq:backus-gilbert-kernel-gt-g}
    \underline{K}(x)
    \geq
    \sum_{t=1}^{N_t}  g_t b_t(x)
    \qquad \forall x\in I.
\end{equation}
If $\rho(x)\geq0$, this implies
\begin{equation}
\label{eq:backus-gilbert-kernel-inequality}
    \underline{\mathcal{K}}[\rho] = \int_I dx \, \rho(x) \underline{K}(x)
    \geq
    \sum_{t=1}^{N_t}  g_t C_t.
\end{equation}
Thus a feasible set of dual coefficients gives a one-sided reconstruction of the target observable.
Upper bounds are obtained analogously.
Clearly, the strength of the resulting rigorous bounds on $\mathcal{K}[\rho]$ necessarily scales with the quality of the bounding polynomials, and therefore with $N_t$.

The lattice bootstrap program presented in this work also offers a third advantage over current linear formulations of spectral reconstructions: it naturally extends to matrix-valued spectral functions.
While state-of-the-art linear methods lack this extension, this work might serve as inspiration for future developments.
Building on the analogies observed here between kernel reconstruction methods and the dual programs of this work,
a matrix-valued formulation of the linear methods appears natural.
For instance, one could envision defining matrix-valued approximate kernels, as in \cref{eq:backus-gilbert-kernel-eq-g}, with matrix-valued coefficients $g$.
When applied to the methods of Ref.~\cite{Hansen:2019idp}, for instance, these coefficients could then be obtained from the minimization of a properly modified functional that incorporates the Hermitian trace pairing of the approximate and the target smearing kernels. 

A notable class of smeared observables is constituted by those with an associated kernel whose linear expansion on the Euclidean time basis is exact.
Within this class, time-momentum representations and window observables have played an important role, e.g., in determinations of hadronic contributions to the anomalous magnetic moment of the muon~\cite{Bernecker:2011gh}.
In the moment-problem language, a finite window observable can be defined in terms of a kernel of the form
\begin{equation}
\label{eq:window-kernel-def-comparison}
    K_{\rm win}(\lambda)
    =
    \sum_{t=t_{\min}}^{t_{\max}} w_t\lambda^t
\end{equation}
or as the correspondingly discretized integral.
If the correlator data are known exactly for all times appearing in the sum, then
\begin{equation}
\label{eq:window-obs-defn-comparison}
    \int_0^1 d\lambda\,K_{\rm win}(\lambda)\tilde{\rho}(\lambda)
    =
    \sum_{t=t_{\min}}^{t_{\max}} w_t C_t
\end{equation}
is obtained directly.
In such cases, the dual certificate is exact with $g_t=w_t$, and there is no inverse problem to solve.
However, Euclidean data is not always precisely available for all values of $t$, or the exact coefficients $w_t$ might still yield unstable reconstructions of the smeared observable. 
This could be the case, e.g., for observables that require long-distance correlator information or for kernels with sharp spectral features.
For these cases, casting the problem as a spectral reconstruction may possibly yield benefits, especially if matrix-valued positivity can be leveraged advantageously.

\subsection{Comparison to the conformal bootstrap}

The finite-dimensional reduction used in the present work is closely related to the numerical conformal bootstrap.
Both approaches start from a positivity condition, formulate a polynomial matrix program, reduce it to a finite SDP, and solve the resulting problem with primal-dual interior point methods~\cite{Simmons-Duffin:2015qma,Landry:2019qug}.
In this technical sense, the computational pipeline is the same.

The physical role of the input data is different.
In the conformal bootstrap, positivity and crossing symmetry constrain possible conformal field theories without Monte Carlo data.
In the present work, positivity is combined with finite Euclidean correlator data computed numerically, with no connection to conformal field theory.
In distinction to the conformal bootstrap and to the older $S$-matrix bootstrap program, the present method does not attempt to solve QCD from general principles alone.
Rather, convex optimization is used to propagate the results of nonperturbative calculations into rigorous bounds on smeared spectral observables.

This distinction also explains why two-point functions play a central role here, as opposed the four-point functions typically used in the conformal bootstrap.
In lattice QCD, two-point functions of chosen operators are nontrivial computable data.
The positivity of their spectral representation already gives useful constraints on smeared spectral quantities.
Higher-point functions could be incorporated in future generalizations, but they are not needed for the basic spectral-bounding problem studied in this work.

\subsection{Alternative optimization algorithms}

The dual problems can also be optimized without converting them to PMPs.
For example, Ref.~\cite{Lawrence:2024hjm} considered scalar problems using interior-point or barrier methods applied directly to the dual.
In that approach, the constrained optimization is replaced by an unconstrained or partially constrained optimization with a barrier term that discourages violation of dual feasibility.

Barrier and exchange methods have an important practical advantage.
They do not require the polynomial or rational structure needed for the finite-dimensional PMP reduction.
They can therefore be applied to a broader class of kernels and basis functions.
They may also involve fewer variables than the finite SDP obtained from a sum-of-squares representation.

The tradeoff is certification.
If the positivity condition is enforced by sampling or discretizing the spectral domain, additional work is needed to ensure that no violation occurs between sample points.
Likewise, if the method optimizes only the dual problem, the usual finite-SDP duality gap is not automatically available as a stopping certificate.
These issues are not necessarily fatal, but they require separate control, for instance through adaptive refinement, interval arithmetic, or recovery of a matching primal feasible solution.

It is therefore not obvious a priori which numerical strategy will be best in practical lattice applications.
The PMP-to-SDP route gives strong certificates for polynomial and rational problems, while barrier or exchange methods may be more flexible for exploratory work or for kernels not well suited to polynomial representation.

\subsection{Other approaches to spectroscopy}

Many lattice methods attack spectral problems by directly estimating finite-volume energies and overlap factors.
Examples include generalized eigenvalue methods, multistate fits, and Lanczos/Rayleigh--Ritz methods, or Prony methods.
These methods are closely related to moment problems and to the Krylov-space interpretation of Euclidean correlator data~\cite{Fischer:2020bgv,Wagman:2024rid,Hackett:2024nbe,Ostmeyer:2024qgu,Abbott:2025yhm}.

The goal of the present work is different.
Here the primary output is not a direct calculation of the finite-volume energy spectrum, but instead a rigorous allowed interval for a smeared spectral observable.
As discussed in \cref{sec:recov-prim-solut}, complementary slackness can identify spectra of extremal feasible measures, but those spectra should be interpreted as optimizers of the bounding problem rather than as estimates of the true physical spectrum.

The two perspectives are complementary.
Spectrum-extraction methods can provide detailed physical interpretation and may give very efficient determinations when a small number of states dominate.
Bootstrap bounds avoid committing to a particular finite-state fit ansatz and instead ask what can be concluded from positivity and the available correlator data alone.
Which approach is more useful depends on the observable, the quality of the data, and the extent to which the relevant spectral region is resolved.

%% file: sec7_conclusion.tex
\section{Discussion and outlook}\label{sec:discussion}\label{sec:conclusion}

This work has developed a spectral bootstrap framework for bounding smeared spectral observables from Euclidean lattice data.
Starting from spectral positivity, the method formulates the problem as a convex optimization over all positive spectral densities compatible with the input data.
The dual variables provide checkable one-sided kernel bounds, and polynomial, rational, or piecewise rational instances can be reduced to finite-dimensional semidefinite programs.

The remainder of this Section summarizes the practical assumptions behind the construction, the uncertainties that are included, the additional systematics that remain.

\paragraph*{Interpretation of bounds for noisy data.}
Bounds for noisy data were discussed in \cref{sec:inexact-data}.
The main questions regarding interpretation are how the covariance matrix is estimated, whether it is regularized, and how the cutoff $\sigma_0$ is assigned.
The interpretation also depends on the number of data components.
Care must be taken, for example, when comparing scalar versus matrix bounds.

\paragraph*{Additional uncertainty from kernel approximation.}
A separate source of uncertainty arises when the target kernel is approximated to obtain a finite-dimensional SDP.
As discussed in \cref{sec:finite-diml-reduce}, the finite reduction is exact for polynomial, rational, and piecewise rational kernels satisfying the stated denominator conditions.
For an approximated kernel, the bootstrap bounds apply to the approximating observable; the difference from the desired observable is an additional systematic that must be bounded, monitored by convergence tests, or otherwise included in the final uncertainty budget.

\paragraph*{Reflection positivity.}
The basic physical input used throughout this work is spectral positivity.
For lattice-regulated theories, this positivity follows from reflection positivity of the lattice action and the existence of a positive transfer matrix.
When these conditions hold, the Euclidean correlator admits a positive spectral representation of the form used in \cref{sec:spectral-reconstruction}.

Many large-scale lattice calculations use improved actions that break reflection positivity at finite lattice spacing in order to improve the approach to the continuum limit.
For such actions, the transfer matrix may fail to be Hermitian at finite lattice spacing, and the earliest Euclidean time slices can contain unphysical oscillatory contributions.
Applying the present bounds as exact finite-lattice-spacing statements then requires additional care.
Possible strategies include omitting affected early-time data, inflating their uncertainties, or using a transfer matrix that advances by multiple lattice spacings when such a positive construction is available.

This issue is not unique to the present method.
Many lattice analyses implicitly rely on reflection positivity even when it is not striclty present at finite lattice spacing.
The appropriate level of rigor will depend on the intended application and the size of violations compared to other systematic uncertainties.

\paragraph*{Matrix-valued correlators.}
Matrix correlators are particularly natural in this framework.
Their positive-semidefinite spectral representation supplies constraints that are invisible in a single correlator.
This is different from purely linear reconstructions, where additional operators do not affect the target observable unless they enter the chosen estimator.
The toy HVP example in \cref{sec:HVP-numerics} illustrates that matrix information can tighten the bounds substantially in favorable regimes.

Such improvement is not necessarily automatic.
A larger correlator matrix also has more covariance components, may require a larger statistical cutoff at fixed confidence level, and may be more expensive to compute.
Whether additional operators are worthwhile (e.g., compared to increasing statistics or extending the temporal dimension for fixed correlator basis)
depends on their overlaps with the spectral region relevant to the target kernel, their noise properties, and the cost of computing the full matrix.
Understanding this tradeoff is an important practical question for future applications.

%% file: app_finite-dim.tex
\section{Finite-dimensional SDP reduction}\label{app:finite-dim}

This appendix collects the technical details used in \cref{sec:finite-diml-reduce} to reduce polynomial matrix programs to finite-dimensional semidefinite programs.
The construction follows the methods used in conformal-bootstrap applications of polynomial matrix programs~\cite{Simmons-Duffin:2015qma}, with minor modifications for the intervals and noisy-data constraints used here.

Following Ref.~\cite{Simmons-Duffin:2015qma}, these results are stated for symmetric matrix polynomials $M(x)\in \symmat{r}[x]$, but as proven in Ref.~\cite{DETTE2002169} the same results hold for Hermitian matrix polynomials $M \in \hermmat{r}[x]$, provided that the certificate matrices $Y_a$ are taken to be Hermitian rather than symmetric.

Let $q_0,q_1,\dots$ be a basis of real polynomials with $\deg q_i=i$.
Define the matrix of polynomial products
\begin{equation}
    Q_{ij}(x)=q_i(x)q_j(x).
\end{equation}
The monomial basis $q_i(x)=x^i$ is the simplest conceptual choice, although orthogonal polynomial bases are typically preferred in numerical implementations.

For a polynomial matrix $M(x)\in\symmat{r}[x]$, positivity on the real line can be represented by a matrix sum of squares.
One convenient form is
\begin{lemma}
\label{lem:trY-sos-block}
Let $M(x)\in\symmat{r}[x]$ have degree $d$.
Then $M(x)\succeq0$ for all $x\in\R$ if and only if there exists a positive-semidefinite block matrix $Y \in \symmat{\floor{d/2}}$ such that
\begin{equation}
\label{eq:pos-poly-sos-trY}
    M(x)
    =
    \Tr'\!\left[
        Y\left(Q(x)\otimes \mathds{1}_r\right)
    \right],
\end{equation}
where $\mathds{1}_r$ is the $r\times r$ identity matrix and $\Tr'$ denotes the partial trace over the polynomial-basis indices, leaving an $r\times r$ matrix.
\end{lemma}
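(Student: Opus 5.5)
The proof splits into two directions. The easy direction is that a sum-of-squares form forces positivity; the hard direction is the matrix Fejér--Riesz factorization, which writes a nonnegative matrix polynomial as $P(x)^\top P(x)$ and then packs it into the block matrix $Y$. Throughout, $Y$ is read as an $(n+1)r\times(n+1)r$ matrix with $n=\floor{d/2}$, built from $r\times r$ blocks $Y_{ij}$ indexed by the polynomial basis $q_0,\dots,q_n$. With this reading, \cref{eq:pos-poly-sos-trY} is
\begin{equation*}
M(x)=\sum_{i,j=0}^{n} q_i(x)q_j(x)\,Y_{ji}=(q(x)\otimes\mathds{1}_r)^\top Y\,(q(x)\otimes\mathds{1}_r),
\end{equation*}
where $q(x)=(q_0(x),\dots,q_n(x))^\top$. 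This uses that $Q(x)=q(x)q(x)^\top$ is rank one, and that the partial trace of $Y(Q\otimes\mathds{1}_r)$ equals the displayed quadratic form.

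\textbf{Sufficiency.} Suppose $Y\succeq0$ and fix $x\in\R$ and $v\in\R^r$. Then
\begin{equation*}
v^\top M(x)v=w^\top Y w\ge 0,\qquad w=q(x)\otimes v.
\end{equation*}
Hence $M(x)\succeq0$ for every real $x$.

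\textbf{Necessity, step 1: the degree is even.} Assume $M(x)\succeq0$ on $\R$. Every diagonal entry $v^\top M(x)v$ is a nonnegative scalar polynomial. Consequently, a nonzero $M$ has even degree $d=2n$ and a leading coefficient that is positive semidefinite. If $d$ were odd, the scalar polynomial $v^\top M(x) v$ would have to drop to degree at most $d-1$ for every $v$. That makes the symmetric leading coefficient vanish, so $M$ has degree at most $d-1$, and $\floor{d/2}$ is still the correct block size.

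\textbf{Necessity, step 2: factorization.} This is the main obstacle. I would invoke the matrix Fejér--Riesz / Jakubovič factorization theorem: a positive semidefinite real symmetric matrix polynomial of degree $2n$ on $\R$ can be written as
\begin{equation*}
M(x)=\sum_{k} P_k(x)^\top P_k(x),
\end{equation*}
with finitely many real matrix polynomials $P_k$ of degree at most $n$. In fact a single factor $P(x)$ of size $r\times r$ suffices, and \cite{DETTE2002169} covers the Hermitian case used in the paper. I would first try to prove this factorization by induction on the degree. Take $\det M(x)$, which is a nonnegative scalar polynomial. At each real root $x_0$, choose a kernel vector of $M(x_0)$; positivity forces a double zero of $M$ in that direction, which lets us split off a linear factor $(x-x_0)$ on both sides. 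Complex roots come in conjugate pairs and are split off the same way, which reduces the problem to the case where $\det M>0$ on all of $\R$. That case is handled with a Smith-form or spectral-factorization argument. The delicate part is keeping real coefficients and controlling degrees when $\det M$ vanishes identically, i.e.\ when $M$ is singular everywhere. I would handle this by first reducing to constant rank via a unimodular change of basis.

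\textbf{Necessity, step 3: assembling $Y$.} Expand each factor in the polynomial basis as
\begin{equation*}
P_k(x)=\sum_{i=0}^{n} q_i(x)A_{k,i}.
\end{equation*}
Then
\begin{equation*}
M(x)=\sum_{i,j}q_i(x)q_j(x)\sum_k A_{k,i}^\top A_{k,j}.
\end{equation*}
Define the blocks $Y_{ij}=\sum_k A_{k,i}^\top A_{k,j}$. This $Y$ is a Gram matrix, $Y=\sum_k A_k^\top A_k$ with $A_k=(A_{k,0},\dots,A_{k,n})$, and is therefore positive semidefinite. It reproduces \cref{eq:pos-poly-sos-trY} by construction. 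Since this step involves only a change of basis, it does not depend on the choice of $q_i$ as long as $\deg q_i=i$.
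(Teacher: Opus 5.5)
Your proof is correct and is the standard argument behind this lemma. The paper states the lemma without proof and defers to the references it cites. You get sufficiency from $v^\top M(x)v=(q(x)\otimes v)^\top Y(q(x)\otimes v)\ge 0$, and necessity from a matrix Fej\'er--Riesz/Jakubovi\v{c} factorization repackaged as a Gram matrix. You are also right to take $Y$ with $\floor{d/2}+1$ blocks of size $r$, as in the paper's $V(x)$, rather than the literal $\symmat{\floor{d/2}}$.

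The one slip is your aside that a single real $r\times r$ factor suffices. This is false over $\R$: for example, $\diag(x^2+1,1)$ has determinant $x^2+1$, which is not a square in $\R[x]$. Nothing depends on the aside, though. The sum over $k$ is all step~3 uses; for instance, writing a complex factor as $P=A+iB$ gives $M=P^*P=A^\top A+B^\top B$. The bound $\deg P_k\le n$ is also automatic, because the top-degree coefficients $P_{k,m}^\top P_{k,m}\succeq0$ cannot cancel.
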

Equivalently, one may write
\begin{equation}
    M(x)
    =
    V(x)^\top Y V(x),
\end{equation}
where $V(x)=(q_0(x),q_1(x),\dots,q_m(x))\otimes \mathds{1}_r$ for $m = \floor{d/2}$.

For a half-line, the corresponding representation is
\begin{lemma}
\label{lem:trY-sos-block-stieltjes}
Let $M(x)\in\symmat{r}[x]$ have degree $d$.
Then $M(x)\succeq0$ for all $x\geq0$ if and only if there exist positive-semidefinite block matrices $Y_1 \in \symmat{\floor{d/2}+1}$ and $Y_2 \in \symmat{\floor{(d-1)/2}+1}$ such that
\begin{equation}
\label{eq:pos-poly-sos-trY-stieltjes}
    M(x)
    =
    \Tr'\!\left[
        Y_1\left(Q(x)\otimes \mathds{1}_r\right)
    \right]
    +
    x\,
    \Tr'\!\left[
        Y_2\left(Q(x)\otimes \mathds{1}_r\right)
    \right].
\end{equation}
\end{lemma}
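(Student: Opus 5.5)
The plan is to reduce \cref{lem:trY-sos-block-stieltjes} to the real-line statement \cref{lem:trY-sos-block} by the substitution $x=u^2$, following the classical scalar argument for Stieltjes-type positivity. The ``if'' direction is immediate. For $x\ge0$ and any $v\in\R^r$, $v^\top \Tr'[Y_i(Q(x)\otimes\mathds{1}_r)]v = (q(x)\otimes v)^\top Y_i (q(x)\otimes v)\ge0$, where $q(x)=(q_0(x),\dots,q_m(x))^\top$. Since the second term carries the factor $x\ge0$, the sum is positive semidefinite. All the work is in the ``only if'' direction.

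Suppose $M(x)\succeq0$ for all $x\ge0$, with $\deg M=d$. Define $N(u)=M(u^2)$, a symmetric matrix polynomial of degree $2d$ with $N(u)\succeq0$ for all $u\in\R$. By \cref{lem:trY-sos-block} there is a PSD $Y$ with $N(u)=V(u)^\top Y V(u)$. Factoring $Y=\sum_k w_k w_k^\top$ gives $N(u)=\sum_k P_k(u)^\top P_k(u)$ with $P_k$ row-vector-valued polynomials, i.e.\ $1\times r$ matrix polynomials, of degree at most $d$. Split each into even and odd parts, $P_k(u)=A_k(u^2)+u\,B_k(u^2)$.

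Because $N$ is even, replace $N(u)$ by $\tfrac12[N(u)+N(-u)]$. The cross terms $u(A_k^\top B_k+B_k^\top A_k)$ are odd in $u$ and cancel, leaving
\begin{equation*}
M(u^2)=\sum_k \left[A_k(u^2)^\top A_k(u^2)+u^2\,B_k(u^2)^\top B_k(u^2)\right].
\end{equation*}
Setting $x=u^2$, this identity holds for all $x\ge0$, hence identically as polynomials in $x$. The degree bounds follow from $\deg P_k\le d$. Indeed, $2\deg A_k\le d$ gives $\deg A_k\le\floor{d/2}$, and $2\deg B_k+1\le d$ gives $\deg B_k\le\floor{(d-1)/2}$.

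Finally, expand each $A_k$ in the basis $q_0,\dots,q_{\floor{d/2}}$ with coefficient vectors $a_k\in\R^{(\floor{d/2}+1)r}$. Then $\sum_k A_k^\top A_k=\Tr'[Y_1(Q(x)\otimes\mathds{1}_r)]$ with $Y_1=\sum_k a_k a_k^\top\succeq0$, and similarly $Y_2$ is built from the $B_k$. This gives \cref{eq:pos-poly-sos-trY-stieltjes}. For the Hermitian case the same steps go through with conjugate transposes and Hermitian $Y_i$, using Ref.~\cite{DETTE2002169} in place of the symmetric real-line lemma.

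The step I expect to need the most care is the degree control, i.e.\ showing each $P_k$ has degree at most $d$ rather than merely some larger degree whose top terms cancel. The plan is to use that the top-degree coefficient of $\sum_k P_k^\top P_k$ is a sum of PSD matrices $p_k^\top p_k$ (for the common top degree), which cannot cancel. This is already built into the block-size choice in \cref{lem:trY-sos-block}, so it can be inherited from there. The symmetrization step is what justifies discarding the odd cross terms, and it uses only that $N$ is even.
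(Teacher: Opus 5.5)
Your proof is correct and complete. The paper gives no proof of \cref{lem:trY-sos-block-stieltjes} to compare against: it states the result in the form used in Ref.~\cite{Simmons-Duffin:2015qma} and points to Ref.~\cite{DETTE2002169} for the Hermitian version.

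Your route is a self-contained reduction to the real-line certificate \cref{lem:trY-sos-block} via $x=u^2$. Evenness of $M(u^2)$ kills the cross terms $u(A_k^\top B_k+B_k^\top A_k)$. The even and odd parts of each $P_k$ then serve as Gram factors for $Y_1$ and $Y_2$, and the block sizes $\floor{d/2}+1$ and $\floor{(d-1)/2}+1$ follow from $\deg P_k\le d$.

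This approach has three advantages:
\begin{itemize}
\item The only deep input is the matrix sum-of-squares theorem on $\R$, which the paper also takes from the literature.
\item The Hermitian case is the same computation with conjugate transposes.
\item It extends to the bounded-interval certificates of \cref{lem:trY-sos-block-hausdorff}. Apply your half-line result to $(1+t)^d M\bigl(t/(1+t)\bigr)$, substitute $t=x/(1-x)$, and absorb $(1-x)^{\floor{d/2}}$ into the $A$-factors and $(1-x)^{\floor{(d-1)/2}}$ into the $B$-factors. This reproduces exactly the even-$d$ form $\{1,\ x(1-x)\}$ and odd-$d$ form $\{x,\ 1-x\}$, with the stated block sizes.
\end{itemize}

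You are right to back up the degree bound with the leading-coefficient argument rather than only inheriting it from the block size in \cref{lem:trY-sos-block}. As printed, that lemma says $Y\in\symmat{\floor{d/2}}$, which is off by one relative to $V(x)=(q_0,\dots,q_m)\otimes\mathds{1}_r$ with $m=\floor{d/2}$. Taken literally it would force $\deg P_k\le d-1$, contradicting $\deg N=2d$. Your argument avoids this: the top coefficient $\sum_k p_k^\top p_k$ has trace $\sum_k|p_k|^2>0$, so it cannot cancel, and $\deg P_k\le d$ holds regardless of how the block size is read.
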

By translating the variable, the same representation holds for any half-line $[a,\infty)$ with the lemma applied to $M(a+x)$ with $x\geq0$.

For bounded intervals, one may use the standard Hausdorff-moment positivity certificates.
For the interval $[0,1]$, the result can be written as follows~\cite{DETTE2002169}:
\begin{lemma}[Dette--Studden]
\label{lem:trY-sos-block-hausdorff}
Let $M(x)\in\symmat{r}[x]$ have degree $d$.
If $d$ is even, then $M(x)\succeq0$ for all $x\in[0,1]$ if and only if there exist positive-semidefinite block matrices $Y_1 \in \symmat{d/2+1}$ and $Y_2 \in \symmat{d/2}$ such that
\begin{equation}
\label{eq:pos-poly-sos-trY-hausdorff-even}
    M(x)
    =
    \Tr'\!\left[
        Y_1\left(Q(x)\otimes \mathds{1}_r\right)
    \right]
    +
    x(1-x)
    \Tr'\!\left[
        Y_2\left(Q(x)\otimes \mathds{1}_r\right)
    \right].
\end{equation}
If $d$ is odd, then $M(x)\succeq0$ for all $x\in[0,1]$ if and only if there exist positive-semidefinite block matrices $Y_1 \in \symmat{\floor{d/2}+1}$ and $Y_2 \in \symmat{\floor{d/2}+1}$ such that
\begin{multline}
\label{eq:pos-poly-sos-trY-hausdorff-odd}
    M(x)
    =
    x\,
    \Tr'\!\left[
        Y_1\left(Q(x)\otimes \mathds{1}_r\right)
    \right]
    \\
    +
    (1-x)
    \Tr'\!\left[
        Y_2\left(Q(x)\otimes \mathds{1}_r\right)
    \right].
\end{multline}
\end{lemma}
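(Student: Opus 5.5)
The \emph{if} direction is immediate. For any polynomial basis vector $V(x)=(q_0(x),\dots,q_m(x))\otimes\mathds{1}_r$ we have $\Tr'[Y(Q(x)\otimes\mathds{1}_r)]=V(x)^\top Y V(x)\succeq0$ whenever $Y\succeq0$. The weights $x$, $1-x$ and $x(1-x)$ are nonnegative on $[0,1]$. Hence both \cref{eq:pos-poly-sos-trY-hausdorff-even} and \cref{eq:pos-poly-sos-trY-hausdorff-odd} define matrices that are positive semidefinite on $[0,1]$. The substance is the \emph{only if} direction. I would reduce it to the real-line certificate of \cref{lem:trY-sos-block}, that is, the matrix Fejér--Riesz/Jakubovič factorization $N(t)=\sum_k H_k(t)^\top H_k(t)$, through the rational change of variables $x=t^2/(1+t^2)$. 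This map sends $\R$ onto $[0,1)$, and $[0,1]$ is recovered by continuity.

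\emph{Even case, $d=2m$.} Define
\begin{equation*}
N(t)=(1+t^2)^{m}M\!\left(\tfrac{t^2}{1+t^2}\right).
\end{equation*}
Each monomial $x^k$ with $k\le d$ contributes $t^{2k}(1+t^2)^{m-k}$, so $N$ is a polynomial in $t$ only if $\deg M\le m$. This suggests the correct multiplier is $(1+t^2)^{d}$, giving $N$ of degree $2d$. I will redo that bookkeeping carefully, since it is exactly what fixes the block sizes $d/2+1$ and $d/2$.

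In either normalization, $N$ is an even matrix polynomial with $N(t)\succeq0$ on $\R$, so \cref{lem:trY-sos-block} gives $N=\sum_k H_k^\top H_k$. I split each factor into even and odd parts, $H_k(t)=E_k(t^2)+tO_k(t^2)$. Symmetrizing, $N(t)=\tfrac12[N(t)+N(-t)]$, kills the cross terms:
\begin{equation*}
N(t)=\sum_k E_k(s)^\top E_k(s)+s\sum_k O_k(s)^\top O_k(s),\qquad s=t^2\ge0.
\end{equation*}
I then substitute back $s=x/(1-x)$ and multiply by the appropriate power of $(1-x)$ to clear denominators. The first sum becomes a Gram form $\Tr'[Y_1(Q\otimes\mathds{1}_r)]$ in polynomials of degree $\le d/2$. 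The second sum acquires the factor $x(1-x)$, possibly after distributing the remaining powers of $(1-x)$ into the squares and noting $(1-x)^2$ is itself a square. The Gram matrices are $Y_i=C_i^\top C_i$, where $C_i$ stacks the coefficient vectors of the resulting factors, so $Y_i\succeq0$ automatically.

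\emph{Odd case, $d=2m+1$.} The same substitution applies, with the multiplier chosen so that $N$ has even degree. The even/odd split now yields one square times an even power of $(1-x)$ and one times $x$. After clearing denominators, the extra odd power of $(1-x)$ must be absorbed by writing $x\cdot\mathrm{SOS}+(1-x)\cdot\mathrm{SOS}$. Concretely, I would use the identity $1=x+(1-x)$ to split the unweighted square term into its $x$- and $(1-x)$-weighted parts, which is the form of \cref{eq:pos-poly-sos-trY-hausdorff-odd}.

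The step I expect to be the main obstacle is degree bookkeeping combined with the endpoint $x=1$. One must show that the factors obtained after back-substitution are genuine polynomials of degree at most $\lfloor d/2\rfloor$, matching the stated block sizes. This requires the leading behaviour of $N$ at $t\to\infty$ to correspond to $M(1)$, so that no negative powers of $(1-x)$ survive. A second point to check is that in the symmetric (real) setting the Fejér--Riesz factors can be taken real. If that is not available directly, I would work over $\hermmat{r}$, which the paper allows via Ref.~\cite{DETTE2002169}, and take real parts at the end, since $M$ is real.
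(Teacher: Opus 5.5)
The paper does not prove this lemma; it quotes it from Ref.~\cite{DETTE2002169}. Your route is correct and genuinely different: you derive the interval certificate from the real-line certificate of \cref{lem:trY-sos-block} alone, via $x=t^2/(1+t^2)$ and an even/odd split. As a by-product, your symmetrized identity $\tilde N(s)=\sum_k E_k(s)^\top E_k(s)+s\sum_k O_k(s)^\top O_k(s)$ is exactly \cref{lem:trY-sos-block-stieltjes} for $\tilde N$. So the three certificates of \cref{app:finite-dim} become a chain that rests on the single matrix Fej\'er--Riesz fact. The citation, by contrast, buys brevity and skips the degree bookkeeping. Your concern about real factors is settled by \cref{lem:trY-sos-block} itself: a real PSD Gram matrix factors as $Y=C^\top C$ with $C$ real, giving $N=(CV)^\top(CV)$. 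The argument also runs verbatim with ${}^\ast$ in place of ${}^\top$ for Hermitian $M$.

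The bookkeeping you deferred closes with the multiplier $(1+t^2)^d$ in both parities. Put $\tilde N(s)=(1+s)^dM\bigl(s/(1+s)\bigr)=\sum_k M_k s^k(1+s)^{d-k}$, which has degree at most $d$. Then $N(t)=\tilde N(t^2)$ has degree at most $2d$, so $\deg H_k\le d$, $\deg E_k\le \floor{d/2}$ and $\deg O_k\le\floor{(d-1)/2}$. Suppressing the sum over $k$,
\begin{equation*}
M(x)=(1-x)^d\,\tilde N\bigl(\tfrac{x}{1-x}\bigr)=
\begin{cases}
\hat E^\top\hat E+x(1-x)\,\hat O^\top\hat O, & d=2m,\\
(1-x)\,\hat E^\top\hat E+x\,\hat O^\top\hat O, & d=2m+1,
\end{cases}
\end{equation*}
with $\hat E=(1-x)^mE\bigl(\tfrac{x}{1-x}\bigr)$ in both cases. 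For even $d$, $\hat O=(1-x)^{m-1}O\bigl(\tfrac{x}{1-x}\bigr)$; for odd $d$, $\hat O=(1-x)^{m}O\bigl(\tfrac{x}{1-x}\bigr)$. These are polynomials of degrees $\le m,\ \le m-1$ (even) and $\le m,\ \le m$ (odd), which are exactly the stated block sizes. The degree bounds on $E_k,O_k$ are precisely what rules out negative powers of $(1-x)$. The point $x=1$ needs no separate treatment, because a polynomial identity valid on $[0,1)$ holds everywhere.

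In particular, drop the $1=x+(1-x)$ step in the odd case: no unweighted square arises. If you manufacture one, e.g.\ by using the multiplier $(1+t^2)^{d+1}$ to make $\tilde N$ of even degree, that square has degree $m+1$. Splitting it then gives blocks of size $m+2$, which overshoots the lemma.
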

The corresponding certificate on a general bounded interval $[a,b]$ is obtained by an affine change of variables mapping $[a,b]$ to $[0,1]$.

%% file: app_primal-dual.tex
\section{Primal-dual interior point methods}\label{sec:primal-dual}

This appendix reviews the solution of finite-dimensional SDPs like those in \cref{sec:finite-diml-reduce} using primal-dual interior point methods.
\cref{sec:primal_dual_finite_sdp} begins by defining a dual version of the finite SDP in \cref{eq:dual-pmp-sdp-full} and defining the duality gap.
\cref{sec:interior_point_methods} then describes their solution using primal-dual interior point methods; \cref{sec:interior_point_methods} follows closely the presentation of Ref.~\cite{Simmons-Duffin:2015qma}, with a few modifications specific to the SDPs encountered in this work.

\subsection{Primal-dual finite SDP and the duality gap \label{sec:primal_dual_finite_sdp}}

Consider a generic finite SDP of the form
\begin{equation}
    \label{eq:dual-finite-sdp}
\begin{aligned}
    \maximize_{y, Y} \quad &
    b\cdot y + D \matdot Y \\
    \st \quad & \Tr[A_p Y] + (By)_p = c_p, \\&\forall p \in \{1,\dots, P\},\\
    & Y \succeq 0,
\end{aligned}
\end{equation}
for which the Lagrange dual has the form
\begin{equation}
\label{eq:primal-pmp-sdp-full}
\begin{aligned}
    \minimize_{x} \quad & c\cdot x \\
    \st \quad& X=\sum_p A_p x_p-D,\\
    &B^\top x=b,\\
    &X\succeq0.
\end{aligned}
\end{equation}
This finite-dimensional dual pair should not be confused with the original primal and dual spectral reconstruction problems.
The terminology is inherited from the SDP solver: \cref{eq:primal-pmp-sdp-full} is the primal finite SDP, while \cref{eq:dual-finite-sdp} is the dual finite SDP.

If $(x,X)$ and $(y,Y)$ are feasible for the finite primal and dual SDPs, then
\begin{equation}
\label{eq:primal-dual-gap-SDP}
    c\cdot x - D\matdot Y - b\cdot y
    =
    X\matdot Y
    \geq0.
\end{equation}
The left-hand side is the finite-SDP duality gap.
Primal-dual interior point methods monitor this quantity during optimization.
A sufficiently small duality gap, together with feasibility of the finite SDP variables, certifies that the finite SDP has been solved to the corresponding precision.

Further implementation details of the primal-dual interior point method used in the numerical examples are reviewed in \cref{sec:primal-dual}.

\subsection{Interior-point methods \label{sec:interior_point_methods}}

The starting point for primal-dual interior point methods is the finite SDP dual pair in \cref{eq:dual-finite-sdp,eq:primal-pmp-sdp-full}.
For feasible primal and dual SDP variables, the duality gap is $X\matdot Y\geq0$.
Since $X,Y\succeq0$, this gap vanishes if and only if the matrix product $XY$ vanishes.

Primal-dual methods replace the exact complementarity condition by the deformed equation
\begin{equation}
\label{eq:mu-relaxed-complementarity}
    XY = \mu \mathds{1}_{\dim X},
\end{equation}
where $\mu>0$ and $\mathds{1}_{\dim X}$ is the identity matrix of the same dimension as $X$.
Together with primal and dual feasibility, \cref{eq:mu-relaxed-complementarity} defines the central path
\begin{equation}
    (x(\mu),X(\mu),y(\mu),Y(\mu)).
\end{equation}
The optimization proceeds by Newton steps that push the variables toward the central path while gradually taking $\mu\to0$.

To derive the Newton step, expand
\begin{equation}
    (x,X,y,Y)\mapsto(x+dx,X+dX,y+dy,Y+dY)
\end{equation}
and solve the linearized central-path equations.
It is useful to define the residuals
\begin{align}
    P &= \sum_p A_p x_p - X - D, \\
    p &= b - B^\top x, \\
    d_i &= c_i - \Tr[A_iY] - (By)_i, \\
    R &= \mu I - XY.
\end{align}
In terms of these residuals, the vector steps $dx$ and $dy$ are obtained from
\begin{equation}
\label{eq:schur-compl-solve-dxdy}
    \begin{pmatrix}
        S & -B \\
        B^\top & 0
    \end{pmatrix}
    \begin{pmatrix}
        dx \\
        dy
    \end{pmatrix}
    =
    \begin{pmatrix}
        -d_i-\Tr[A_iZ] \\
        p
    \end{pmatrix},
\end{equation}
where $Z=X^{-1}(PY-R)$, and $S$ is referred to as the Schur complement and defined by
\begin{equation}
\label{eq:schur-complement-S-defn}
    S_{ij} = \Tr[A_iX^{-1}A_jY].
\end{equation}
For the rank-one choices of $A_i$ used in this work, the trace in \cref{eq:schur-complement-S-defn} can be evaluated efficiently by changing the order of index contraction, as in Ref.~\cite{Simmons-Duffin:2015qma}.
Once $dx$ is known, the matrix steps are
\begin{align}
    dX &= P+\sum_i A_i dx_i, \\
    dY &= X^{-1}(R-dX\,Y).
\end{align}
The expression for $dY$ is not necessarily symmetric.
Following SDPB, we symmetrize $dY$ before taking the step.

There are several implementation details in which the solver used here differs from the original implementation of Ref.~\cite{Simmons-Duffin:2015qma}.
First, the implementation allows only a subset of the linear equalities to be enforced, as needed for the noisy-data SDP constraints in \cref{eq:opt-problem-def-inexact-block-dual}.
Second, following later versions of SDPB, we do not add the identity matrix to the left-hand side of \cref{eq:schur-compl-solve-dxdy}~\cite{Landry:2019qug}.
Third, we use the same step size for the primal and dual variables, taking the smaller allowed value when necessary.
Occasional stalling is handled by increasing $\mu$ by a factor of $1.5$ and holding it fixed for ten Newton iterations.
All numerical examples in this work use 150 decimal digits of precision.

The final modification is the elimination of free variables.
This reduction was suggested in Ref.~\cite{Simmons-Duffin:2015qma}, following Ref.~\cite{kobayashi2007conversion}, but was not used there because of the structure of conformal-bootstrap SDPs.
In the problems considered here, eliminating the free variables reduces the problem size and improves numerical stability.

The essential idea is to split the primal vector as
\begin{equation}
    x = x_\parallel+x_\perp
\end{equation}
such that
\begin{equation}
    B^\top x_\parallel=b,
    \qquad
    B^\top x_\perp=0.
\end{equation}
The variables $x_\parallel$ can then be eliminated, which also eliminates the free variables $y$.
Rather than constructing the relevant subspaces from linearly independent columns of $B$, we use an SVD factorization
\begin{equation}
    B = U\diag(\sigma_i)\tilde{V}^\top = UV^\top,
\end{equation}
where the singular values have been absorbed into $V^\top$.
The decomposition is arranged so that $V$ is square and invertible, and
\begin{equation}
    U =
    \begin{pmatrix}
        U_\parallel & U_\perp
    \end{pmatrix},
\end{equation}
with $U_\parallel$ square of the same dimension as $V$.

With this decomposition, the substitutions
\begin{align}
    \tilde{x} &= U_\parallel x, \\
    \tilde{c} &= U_\perp c, \\
    \tilde{D} &= \sum_p A_p U_{\parallel,pq}(V^{-\top}b)_q, \\
    \alpha_0 &= c^\top U_\parallel V^{-\top}b, \\
    \tilde{A}_q &= \sum_p (U_\perp)_{pq}A_p
\end{align}
reduce the original primal finite SDP to
\begin{equation}
\label{eq:primal-SDP-free-reduced}
\begin{aligned}
    \text{minimize }& \alpha_0+\tilde{c}\cdot\tilde{x} \\
    \text{subject to }& X=\sum_q \tilde{A}_q\tilde{x}_q-\tilde{D},\\
    &X\succeq0.
\end{aligned}
\end{equation}
Dualizing gives the reduced dual SDP
\begin{equation}
\label{eq:dual-SDP-free-reduced}
\begin{aligned}
    \text{maximize }& \tilde{D}\matdot Y \\
    \text{subject to }& \Tr[\tilde{A}_pY]=\tilde{c}_p,
    \quad \forall p.
\end{aligned}
\end{equation}
The Schur complement for the reduced problem can be computed from the unreduced Schur complement by
\begin{equation}
    \tilde{S}(X,Y)=U_\perp^\top S(X,Y)U_\perp.
\end{equation}
This projection breaks some of the original block structure of $S$, as noted in Ref.~\cite{Simmons-Duffin:2015qma}.
For the SDPs considered here, the reduced dimension and improved conditioning outweigh this cost.

%% file: app_equivalence_proof.tex
\section{Equivalent descriptions of the Wertevorrat
}\label{app:equivalence_proof}

This appendix contains a proof of \cref{example:moment_bounds} and numerical demonstrations of both \cref{example:moment_bounds,example:interpolation_bounds}.
Since both sides of the equivalence can always be computed explicitly, the proof 
focuses on structural aspects, 
ignoring inessential details including measure-theoretic details of convergence and regularity and existence or relevant minimizations.

\begin{proof}[Proof of \cref{example:moment_bounds}]
Suppose that the moments $\hat C_t$ are given for a scalar non-extremal
truncated Hamburger moment problem with $N=2n$.  Let
$ \mathcal{F}=\{\rho\in\mathcal M^1_+(\R): C_t[\rho]=\hat C_t,\ t=0,\ldots,N\}$
be the set of feasible spectral functions.
For $z$ in the upper half-plane,
define the set of possible values of the Green's function in \cref{eq:stieltjes-trans-def} by
\begin{equation}
\label{eq:Wertevorrate-definition-app}
  \Delta_N(z) = \cl\{G_\rho(z):\rho\in\mathcal{F}\}.
\end{equation}
Kovalishina~\cite{Kovalishina1984} has proven that $\Delta_N(z)$ is a disk
\begin{equation}
\label{eq:Wertevorrate-disk-app}
  \Delta_N(z)=\bar{\D}(c,r)
  =
  \{\zeta\in\C:|\zeta-c|\leq r\},
\end{equation}
with known analytic formulas for the center $c$ and radius $r$ in terms of the moments $\hat{C}_t$.

On the other hand, the SDP reconstruction is obtained from the following intersection of half-planes:
\begin{equation}
  \D'
  =
  \bigcap_{\theta\in[0,2\pi)}
  \left\{
    \zeta\in\C:
    \real(e^{i\theta}\zeta)\geq \ell(\theta)
  \right\}
\end{equation}
where, for each $\theta\in[0,2\pi)$, we define the supporting lower bound
\begin{equation}
  \ell(\theta)
  = \inf_{\rho\in\mathcal{F}} \real\left(e^{i\theta}G_\rho(z)\right).
\end{equation}
Equivalently, assuming the relevant extremal values are attained after taking
the closure and using \cref{eq:Wertevorrate-definition-app} we obtain
\begin{equation}
  \ell(\theta)
  =
  \min_{\zeta\in\Delta_N(z)}
  \real(e^{i\theta}\zeta).
\end{equation}
For each $\theta$, the minimum of
$\real(e^{i\theta}\zeta)$ 
is attained at the boundary point in
the direction opposite to $e^{i\theta}$, and hence
\begin{equation}
  \ell(\theta)
  =
  \min_{\zeta\in\bar{\D}(c,r)}
  \real(e^{i\theta}\zeta)
  =
  \real(e^{i\theta}c)-r.
\end{equation}
Thus the SDP half-plane reconstruction is
\begin{equation}
  \D'
  =
  \bigcap_{\theta\in[0,2\pi)}
  \left\{\zeta\in\C:
    \real(e^{i\theta}(\zeta-c))\geq -r
  \right\}.
\end{equation}
If $\zeta\in\bar{\D}(c,r)$, then
\begin{equation}
  \real(e^{i\theta}(\zeta-c))
  \geq -|\zeta-c|
  \geq -r
  \qquad \forall\theta,
\end{equation}
so $\zeta\in\D'$, i.e. $\D(c,r) \subseteq \D'$.  Conversely, suppose $\zeta\in\D'$.
If $\zeta=c$, then $\zeta\in \bar{\D}(c,r)$ trivially.
Suppose therefore $x=\zeta-c\neq 0$.
Choosing $\theta=\pi-\arg x$ gives
\begin{equation}
  -|x|=\real(e^{i\theta}x)\geq-r.
\end{equation}
Therefore $|\zeta-c|=|x|\leq r$, so $\zeta\in\bar{\D}(c,r)$,
i.e. $\D' \subseteq \D(c,r)$.
Hence
\begin{equation}
  \D'=\bar{\D}(c,r),
\end{equation}
which was to be shown.
\end{proof}

\begin{figure*}[t!]
    \centering
    \includegraphics[width=0.37\linewidth]{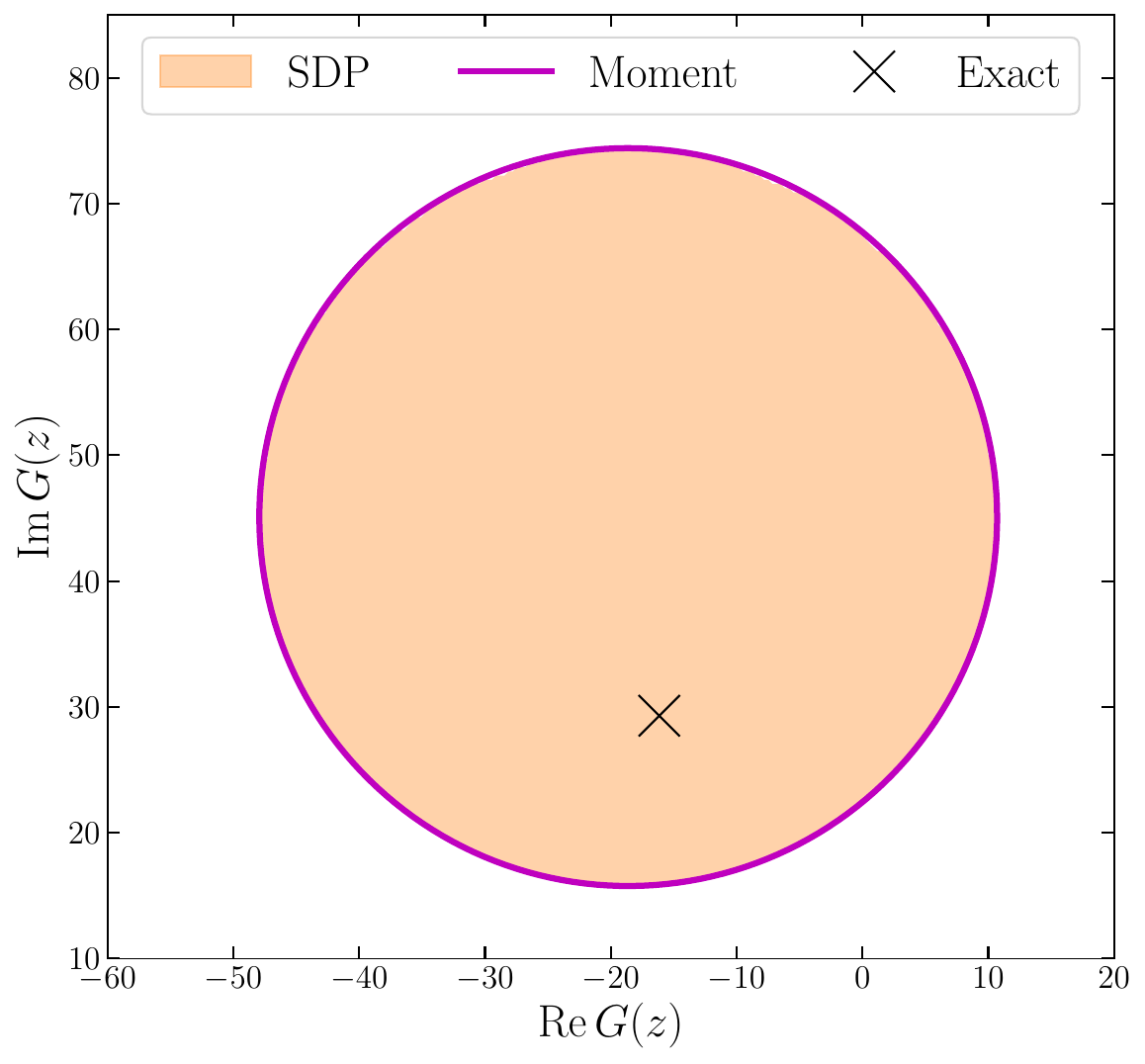}
    \includegraphics[width=0.35\linewidth]{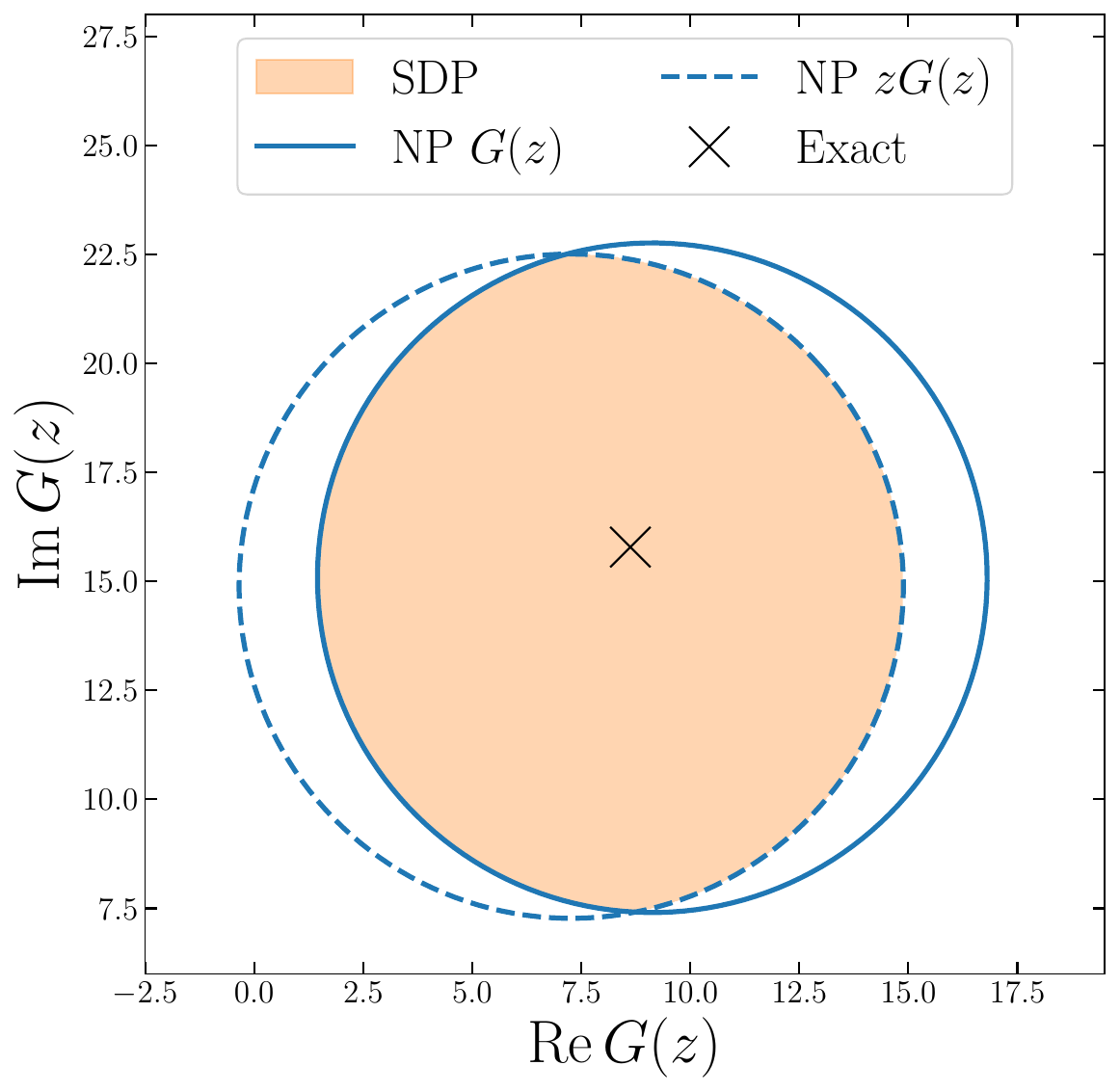}
    \caption{Numerical demonstration of the equivalent formulation of bounds on $G(z)$ in \cref{eq:stieltjes-trans-def}.
    The left panel shows the two different methods in \cref{example:moment_bounds} related to the moment problem.
    The right panel shows the two different methods related to Nevanlinna--Pick interpolation in \cref{example:interpolation_bounds}.
    }
    \label{fig:equivalent_bounds}
\end{figure*}

\Cref{fig:equivalent_bounds} confirms the equivalence of the bounds for $G(z)$ from \cref{eq:stieltjes-trans-def} computed using \cref{example:moment_bounds} and \cref{example:interpolation_bounds} at a fixed locations in the upper half-plane, respectively $z=E+i\epsilon$ and $z=\lambda+i\epsilon$ at $E=0.4$, $\epsilon=0.1$ with $\lambda=e^{-E}$.
For simplicity, the numerical example used the toy model from \cref{sec:toy-model} with $N_{\rm states}=20$ and exact data on $N_t=8$ timeslices.

%% file: app_np-moment.tex
\section{Nevanlinna--Pick data from moment data}\label{sec:NP-moment}

This appendix records a direct relation between finite moment data and and input data for analyticity-based interpolation methods like Nevanlinna--Pick interpolation.
The relation is useful for comparing the two formulations, but it also highlights an important practical point: finite moment data and independently supplied values of a Stieltjes transform are not the same input unless the transformation between them is explicitly performed.

Start from a Hausdorff moment problem on $[0,\infty)$,
\begin{equation}
\label{eq:NP1}
    C_t[\rho]
    =
    \int_0^1 d\lambda\,\rho(\lambda)\lambda^t,
    \qquad
    t=0,\dots,N_t-1.
\end{equation}
Let $\omega_n=2\pi n/N_t$ for bosonic frequencies, or $\omega_n=2\pi(n+\tfrac12)/N_t$ for fermionic frequencies.
Define the discrete Fourier transform of the moment data by
\begin{equation}
\label{eq:fourier-trans-defn}
    C_{\omega_n}[\rho]
    =
    \sum_{t=0}^{N_t-1}
    e^{-i\omega_n t} C_t[\rho].
\end{equation}
Substituting \cref{eq:NP1} into \cref{eq:fourier-trans-defn} gives
\begin{align}
    C_{\omega_n}[\rho]
    &=
    \int_0^1 d\lambda\,\rho(\lambda)
    \sum_{t=0}^{N_t-1}
    \left(\lambda e^{-i\omega_n}\right)^t
    \nonumber \\
    &=
    \int_0^1 d\lambda\,\rho(\lambda)
    \frac{1-\left(\lambda e^{-i\omega_n}\right)^{N_t}}
    {1-\lambda e^{-i\omega_n}}
    \nonumber \\
    &=
    e^{i\omega_n}
    \int_0^1 d\lambda\,\rho(\lambda)
    \frac{1\mp\lambda^{N_t}}
    {e^{i\omega_n}-\lambda}.
    \label{eq:C-omega-l-geom-sum-result}
\end{align}
The upper or lower sign is determined by whether $e^{i\omega_n N_t}=+1$ or $-1$.

Define
\begin{equation}\label{eq:NP_zn}
    z_n=e^{i\omega_n}
\end{equation}
and introduce the modified non-negative density
\begin{equation}
\label{eq:rhotilde-def-NP-moment}
    \tilde{\rho}(\lambda)
    =
    (1\mp\lambda^{N_t})\rho(\lambda),
\end{equation}
with the same sign convention as in \cref{eq:C-omega-l-geom-sum-result}.
Then
\begin{equation}
\label{eq:C-omega-l-eq-stieltjes}
    C_{\omega_n}[\rho]
    =
    z_n
    \int_0^1 d\lambda\,
    \frac{\tilde{\rho}(\lambda)}
    {z_n-\lambda}.
\end{equation}
Up to the overall factor of $z_n$, this is the Stieltjes transform of $\tilde{\rho}$ evaluated at $z_n$.
With the convention
\begin{equation}
\label{eq:NP-moment-stieltjes}
    \tilde{G}(z)
    =
    \int_0^1 d\lambda\,
    \frac{\tilde{\rho}(\lambda)}
    {z-\lambda},
\end{equation}
the Fourier-transformed moment data specify
\begin{equation}
    C_{\omega_n}[\rho]=z_n \tilde{G}(z_n).
\end{equation}

There is redundancy in these interpolation conditions.
For real moment data, $\tilde{G}(z^*)=\tilde{G}(z)^*$ and $z_n^*=z_{N_t-n}$.
A non-redundant set of conditions can therefore be taken from the nodes in the upper half of the unit circle.
In this sense, finite moment data can be repackaged as interpolation data for the Stieltjes transform on the unit circle.

The present formulation of the Nevanlinna-Pick interpolation problem on $\tilde{G}$ is directly accessible from Euclidean time correlator data. 
In contrast to its continuum counterpart in the energy variable $E$ (cf. Section~VI in Ref~\cite{Bergamaschi:2023xzx}),
the interpolation is phrased exactly in the upper half-plane of the complex variable $\lambda$.

This construction explains why the two approaches can be compared within the same convex-optimization framework.
It also explains why bounds obtained from independently chosen moment and Nevanlinna--Pick data need not coincide at finite data.
They agree only when the two sets of inputs encode the same linear information about the same positive density and the same target functional. 

%% file: app_hvp-kernel.tex
\section{HVP kernel}\label{sec:hvp-explicit}

This appendix records the kernel used for the hadronic-vacuum-polarization example in \cref{sec:HVP-numerics}.
The leading-order hadronic vacuum polarization contribution to the anomalous magnetic moment of the muon can be written as~\cite{Blum:2002ii,Aliberti:2025beg,Bernecker:2011gh}
\begin{equation}
\label{eq:hvp-LO-Pi}
    a_\mu^{\text{HVP}}
    =
    \left(\frac{\alpha}{\pi}\right)^2
    \int_0^\infty dQ^2\,K_E(Q^2)\hat{\Pi}(Q^2),
\end{equation}
where
\begin{align}
    K_E(s)
    &=
    \frac{1}{m_\mu^2}
    \hat{s}Z(\hat{s})^3
    \frac{1-\hat{s}Z(\hat{s})}{1+\hat{s}Z(\hat{s})^2},
    \\
    Z(\hat{s})
    &=
    \frac{\hat{s}-\sqrt{\hat{s}^2+4\hat{s}}}{2\hat{s}},
    \qquad
    \hat{s}=\frac{s}{m_\mu^2}.
\end{align}
Here $m_\mu$ is the muon mass and
\begin{equation}
    \hat{\Pi}(Q^2)
    =
    4\pi^2\left[\Pi(Q^2)-\Pi(0)\right]
\end{equation}
is the subtracted vacuum polarization, restricted to the hadronic contribution.

The electromagnetic vector-current correlator admits the spectral representation
\begin{equation}
\label{eq:corr-R-ratio-spectral-appendix}
    C(t)
    =
    \frac{1}{2}
    \int_0^\infty ds\,\sqrt{s}\rho(s)e^{-\sqrt{s}t},
\end{equation}
where the positive spectral density $\rho(s)$ is proportional to the usual $R$-ratio spectral function.
The same spectral density enters the dispersion relation
\begin{equation}
\label{eq:Pi-dispersion}
    \Pi(Q^2)-\Pi(0)
    =
    Q^2
    \int_0^\infty
    \frac{ds\,\rho(s)}
    {s(s+Q^2)}.
\end{equation}
Substituting \cref{eq:Pi-dispersion} into \cref{eq:hvp-LO-Pi} gives the spectral representation
\begin{equation}
    a_\mu^{\text{HVP}}
    =
    \left(\frac{\alpha}{\pi}\right)^2
    \int_0^\infty ds\,K(s)\rho(s),
\end{equation}
with
\begin{equation}
\label{eq:HVP-kernel-s}
    K(s)
    =
    \frac{4\pi^2}{s}
    \int_0^\infty dQ^2\,
    \frac{Q^2K_E(Q^2)}
    {s+Q^2}.
\end{equation}

Let
\begin{equation}
    \tau=\frac{s}{4m_\mu^2},
\end{equation}
and
\begin{align}
    x
    &=
    \frac{1-\beta_\mu}{1+\beta_\mu},
    \\
    \beta_\mu
    &=
    \sqrt{1-\frac{4m_\mu^2}{s}}.
\end{align}
The numerical examples in \cref{sec:HVP-numerics} take $m_\mu=0.2$.
The kernel can be evaluated in closed form in terms of these variables, taking the form~\cite{Aliberti:2025beg,Brodsky:1967sr,Lautrup:1968tdb,Gourdin:1969dm,Bouchiat:1961lbg}:
\begin{widetext}
\begin{equation}
\label{eq:HVP-kernel-closed-form}
    K(s)
    =
    \begin{cases}
    \frac{1}{2}
    -4\tau
    -4\tau(1-2\tau)\log(4\tau)
    -2(1-8\tau+8\tau^2)
    \sqrt{\frac{\tau}{1-\tau}}
    \cos^{-1}(\sqrt{\tau}),
    & 0\leq s\leq4m_\mu^2,\\
    \frac{x^2}{2}(2-x^2)
    +
    \frac{(1+x^2)(1+x)^2}{x^2}
    \left[
        \log(1+x)-x+\frac{x^2}{2}
    \right]
    +
    \frac{1+x}{1-x}x^2\log x,
    & s\geq4m_\mu^2.
    \end{cases}
\end{equation}
\end{widetext}